\newtheorem{theorem}{Theorem}
\newtheorem{corollary}{Corollary}
\theoremstyle{definition}
\newcommand{\reals}{\mathbb{R}}
\begin{document}

\begin{frontmatter}
\title{Spillover Effects in Cluster Randomized Trials with Noncompliance}
\runtitle{Spillover Effects with Noncompliance}

\begin{aug}
\author{\fnms{Hyunseung} \snm{Kang}\ead[label=e1]{hyunseung@stat.wisc.edu}},
\author{\fnms{Luke} \snm{Keele}\ead[label=e2]{luke.keele@pennmedicine.upenn.edu}}

\runauthor{Kang and Keele}

\affiliation{University of Wisconsin-Madison and University of Pennsylvania}

\address{1220 Medical Sciences Center\\
Madison, WI 53706\\
\printead{e1}\\
}

\address{3400 Spruce St., Silverstein 4\\
Philadelphia, PA 19104\\
\printead{e2}\\
}
\end{aug}

\begin{abstract}
Cluster randomized trials (CRTs) are popular in public health and in the social sciences to evaluate a new treatment or policy where the new policy is randomly allocated to clusters of units rather than individual units. CRTs often feature both noncompliance, when individuals within a cluster are not exposed to the intervention, and individuals within a cluster may influence each other through treatment spillovers where those who comply with the new policy may affect the outcomes of those who do not. Here, we study the identification of causal effects in CRTs when both noncompliance and treatment spillovers are present. We prove that the standard analysis of CRT data with noncompliance using instrumental variables does not identify the usual complier average causal effect when treatment spillovers are present. We extend this result and show that no analysis of CRT data can unbiasedly estimate local network causal effects. Finally, we develop bounds for these causal effects under the assumption that the treatment is not harmful compared to the control. We demonstrate these results with an empirical study of a deworming intervention in Kenya. 
\end{abstract}

\begin{keyword}[class=MSC]
\kwd[Primary ]{62K99}
\kwd{62K99}
\kwd[; secondary ]{62K99}
\end{keyword}

\begin{keyword}
\kwd{interference}
\kwd{cluster randomized trial}
\kwd{noncompliance}
\kwd{instrumental variables}
\end{keyword}

\end{frontmatter}

\section{Introduction}
Policy interventions are often evaluated by randomized controlled trials as random allocation of policy/treatment removes selection biases. However, there are two well-known complications in such trials. First, an individual's outcome may be influenced by him/her as well as his/her peers' treatment assignment, a phenomena known as interference \citep{cox_planning_1958}, and spillover effects \citep{Sobel2006-JASA} may occur. To mitigate concerns from spillovers, investigators often use clustered treatment assignments, usually in the form of cluster randomized trials (CRTs), to allow for arbitrary treatment spillovers within clusters \citep{cook_quasi_1979, murray_design_1998, imbens2008recent}. Second, subjects in the study may not comply with their randomized treatment assignment. For example, some may refuse to take the treatment or seek out treatment contrary to their treatment allocation. The method of instrumental variables (IV) is a well-understood framework to analyze randomized experiments with noncompliance; see \citet{angrist_instrumental_2001}, \citet{hernan_instruments_2006}, \citet{imbens_instrumental_2014} and \citet{baiocchi_instrumental_2014} for overviews. Increasingly, policy interventions exist at the intersection of these two complexities and our goal is to explore the consequences of spillovers and noncompliances in CRTs.

There is a large literature on both treatment spillovers and noncompliance, but typically these two topics are studied in isolation. In the literature on noncompliance in CRTs \citep{frangakis_clustered_2002,small_randomization_2008,jo_intention_2008,imai_essential_2009,schochet_estimation_2011}, treatment spillovers are generally considered a nuisance and their effects are minimized by clustered treatment assignment. For example, \citet{small_randomization_2008} notes that ``the issue of interference...does not arise'' in CRTs (Section 2.1 of \citet{small_randomization_2008}) and \citet{imai_essential_2009} assumes no interference (Assumption 3 in Section 6.2 of \citet{imai_essential_2009}). In the growing literature on interference and spillover effects \citep{halloran_study_1991,halloran_causal_1995,Sobel2006-JASA,hong2006evaluating,Rosenbaum2007-JASA,hudgens_toward_2008,TchetgenTchetgenVanderWeele2010,VanderWeele2008-StatMed,Vanderweele:2013,BowersFredricksonPanagopoulos2013,AronowSamii2017-AOAS}, the primarily focus is on defining or estimating network causal quantities and treatment compliance is ignored. Some notable exceptions include \citet{Sobel2006-JASA}, \citet{hong2015causality},\citet{forastiere2016identification}, \citet{kang_peer_2016}, and \citet{imai2018causal}. \citet{Sobel2006-JASA} highlighted problems when noncompliance and interference are both present in non-clustered designs, but did not provide methods to estimate the spillover effects formalized in 
\citet{hudgens_toward_2008}. \citet{forastiere2016identification} studied noncompliance and interference under a Bayesian paradigm. \citet{kang_peer_2016} and \citet{imai2018causal} studied noncompliance and interference under multi-level designs where randomization occurs multiple times and in different hierarchical levels. In contrast, CRTs have only have one level of randomization, which is at the cluster level. \citet{hong2015causality}, in Chapter 15.2, studied noncompliance and interference under the mediation framework.

Here, we study the identification of causal effects in CRTs when both noncompliance and interference are present. First, we show that the standard analysis of CRTs with noncompliance using the instrumental variables (IV) framework, following the methods in \citet{small_randomization_2008}, \citet{jo_intention_2008}, and \citet{schochet_estimation_2011}, does not identify the usual causal estimand known as the complier average causal effect (CACE) in the presence of spillover effects. Second, we extend our result and show that no analysis of CRTs can unbiasedly estimate network effects under noncompliance when interference is present; specifically, we show that there does not exist a function of the observed data from CRTs that can unbiasedly estimate the causal parameters. The second result suggests that CRTs, as an experimental design, are generally unsuited to learn about treatment spillovers in the presence of noncompliance and a new experimental design or stronger assumptions about the data are necessary to study network effects in the presence of noncompliance. Third, we show that for data from CRTs with noncompliance, investigators can estimate bounds on spillover and total effects under an assumption about treatment monotonicity.

\section{Preliminaries: Notation and Assumptions} 

We structure the notation and motivate the assumptions using a public health intervention called the Primary School Deworming Project (PSDP), which we analyze in later sections.
PSDP was conducted by a Dutch nonprofit organization, International Christelijk Steunfonds Africa (ICS), in cooperation with the Busia District Ministry of Health office \citep{miguel2004worms}. The intervention consisted of deworming treatments for intestinal helminths such as hookworm, roundworm, whipworm, and schistosomiasis, delivered to school children in southern Busia,  an area in Kenya with the highest helminth infection rates. The primary goal was to evaluate whether the deworming treatments on reduced intestinal helminths infections among students. The intervention was evaluated using a CRT, where the new treatment was randomly allocated to different schools that served as clusters, and as such, all students in treated schools were offered deworming treatments in the form of oral medications. The medication was believed to have both direct and spillover effects; it not only killed helminths among those who took them (i.e. direct effect), but also decreased disease transmission among peers by reducing the number of helminths in the environment (i.e. a spillover effect) since students were primarily exposed to intestinal helminths through environmental exposures such as outdoor defecation and contact with infected fresh water. Also, unit level noncompliance occurred as the investigators were required to obtain parental consent for the study. Even when parental consent was obtained, students in treated schools did not always take the deworming treatments.

\subsection{Notation} 
\label{sec:notation}

There are $J$ clusters indexed by $j=1,\ldots,J$ and for each cluster $j$, there are $n_{j}$ individuals, indexed by $i=1,\ldots,n_{j}$. There are $N = \sum_{j=1}^{J} n_j$ total individuals in the study population. Let $Z_{j} \in \{0,1\}$ denote the treatment assignment of cluster $j$ where $Z_{j} = 1$ indicates that cluster $j$ was assigned to treatment and $Z_{j} = 0$ indicates that cluster $j$ was assigned to control. Let $D_{ji} \in \{0,1\}$ denote the observed treatment receipt of individual $i$ in cluster $j$ where $D_{ji} = 1$ indicates that individual $i$ actually took the treatment and $D_{ji} = 0$ indicates that individual $i$ did not take the treatment (i.e. control). Note that while the treatment is assigned at the cluster level, the decision to comply occurs at the unit level and hence, there is an extra subscript $i$ in $D_{ji}$. This is important because in the aforementioned PDSP study, schools were assigned to the intervention, but each student and parent in a school could choose to comply with the intervention. Let $Y_{ji} \in \reals$ represent the observed outcome of individual $i$ in cluster $j$. Let $\mathbf{Y} = (Y_{11},Y_{12},\ldots, Y_{Jn_J})$, $\mathbf{D} = (D_{11}, D_{12}, \ldots, D_{Jn_J})$, and $\mathbf{Z} = (Z_1,\ldots, Z_J)$ be the outcome, compliance, and treatment assignment vectors, respectively. 

Let $\mathcal{B}^p = \{0,1\}^p$ be the set of all binary vectors of length $p$. For any vector $\mathbf{v} \in \mathcal{B}^p$ and integer $k \in \{1,\ldots,p\}$, let $\mathbf{v}_{-k}$ denote the vector $\mathbf{v}$ with the $k$th index removed. Let $I(\cdot)$ denote an indicator function where $I(\cdot) = 1$ if the event inside the indicator function is true and $0$ otherwise. Finally, let $\mathbf{1} = (1,\ldots,1) \in \mathcal{B}^p$ be a vector of ones.

\subsection{Potential Outcomes, SUTVA, and Treatment Compliance Heterogeneity} 
\label{sec:A_int}

We define causal effects using the potential outcomes notation \citep{neyman_application_1990,rubin_estimating_1974}. For each $z_{j} \in \{0,1\}$, let $D_{ji}^{(z_{j})}$ denote individual $i$'s potential treatment receipt if his cluster $j$ were assigned treatment $z_j$. Let $\mathbf{D}_{j}^{(z_{j})} = (D_{j1}^{(z_{j})},\ldots,D_{jn_{j}}^{(z_{j})})$ denote the vector of potential treatment receipts for cluster $j$. For each $z_{j} \in \{0,1\}$, $d_{ji} \in \{0,1\}$ and $\mathbf{d}_{j-i} = (d_{j1},\ldots, d_{ji-1}, d_{ji+1}, \ldots,d_{jn_{j}}) \in \mathcal{B}^{n_{j} - 1}$, let $Y_{ji}^{(z_{j}, d_{ji}, \mathbf{d}_{j-i})}$ denote individual $i$'s potential outcome if his cluster $j$ were assigned treatment $z_{j}$, his treatment compliance were $d_{ji}$, and his peers' treatment receipt were $\mathbf{d}_{j-i}$. Also, let $Y_{ji}^{(z_{j}, D_{ji}^{(z_{j})}, \mathbf{D}_{j-i}^{(z_{j})})}$ denote individual $i$'s potential outcome if his cluster $j$ were assigned treatment $z_{j}$ and he and his peers' treatment receipts were their ``natural'' compliances $D_{ji}^{(z_{j})}$ and $\mathbf{D}_{j-i}^{(z_{j})}$, respectively. Let $\mathbf{Y}_{j}^{(z_{j})} = (Y_{j1}^{(z_{j}, D_{j1}^{(z_{j})}, \mathbf{D}_{j-1}^{(z_{j})})},\ldots,Y_{ji}^{(z_{j}, D_{jn_i}^{(z_{j})}, \mathbf{D}_{j-n_i}^{(z_{j})})})$ denote the vector of potential outcomes for cluster $j$. Let $\mathcal{F} = \{ Y_{ji}^{(z_{j},d_{ji}, \mathbf{d}_{j-i})},D_{ji}^{(z_{j})} \mid z_{j}, d_{ji} \in \{0,1\}, \mathbf{d}_{j-i} \in \mathcal{B}^{n_{j} - 1},j=1,\ldots,J, i =1,\ldots,n_{j}\}$ be the set of all potential outcomes, which we assume to be fixed and unknown. 

Our notation assumes what \citet{Sobel2006-JASA} called partial interference, which is a particular violation of the stable unit treatment value assumption (SUTVA) \citep{rubin_estimating_1974}. Specifically, the potential treatment receipt and outcome, $D_{ji}^{(z_{j})}$ and $Y_{ji}^{(z_{j},d_{ji},\mathbf{d}_{j-i})}$ respectively, are only affected by values from cluster $j$, specifically $z_{j}$, $d_{ji}$, and $\mathbf{d}_{j-i}$; the potential treatment receipt and outcome are not affected by values from other clusters $j' \neq j$, say by $z_{j'}$, $d_{j'i}$, and $\mathbf{d}_{j'-i}$. Also, in a CRT, clustered treatment assignment implies that we only observe the potential treatment receipt of individual $i$ in cluster $j$ when all individuals in cluster $j$ are assigned treatment, $D_{ji}^{(1)}$, or when all individuals are assigned control $D_{ji}^{(0)}$; we do not observe individual $i$'s potential treatment receipt when some individuals in cluster $j$ are assigned treatment and rest are assigned control. For example, we do not observe $D_{ji}^{(0,1,\ldots,1)}$ where everyone except individual $i$ in cluster $j$ are assigned treatment. But, due to unit level noncompliance where only some individuals actually end up taking the treatment, our setting allows a fraction of individuals within a cluster to take the treatment so that we have variation in treatment receipt within a cluster. This is in contrast to the two-stage randomization design of \citet{hudgens_toward_2008} which has two stages of randomization, both at the cluster and the unit level, and treatment receipt is assumed to be randomly assigned.  We consider a context where treatment receipt is self-selected because of noncompliance.

Moreover, under interference, treatment compliance may be heterogeneous. The extant literature on CRTs with noncompliance has used potential outcomes of the form $Y_{ji}^{(z_{j}, d_{ji})}$ where the outcome of individual $i$ is a function of his own treatment receipt $d_{ji}$ and his cluster treatment assignment $z_j$ \citep{frangakis_clustered_2002,small_randomization_2008,jo_intention_2008,imai_essential_2009,schochet_estimation_2011}. The potential outcomes notation in this paper, $Y_{ji}^{(z_{j}, d_{ji}, \mathbf{d}_{j-i})}$, is a generalization of prior literature's notation because individual $i$'s potential outcome is affected both by his own compliance $d_{ji}$ and the compliances of his peers $\mathbf{d}_{j-i}$. As a concrete example, in a hypothetical cluster of size $n_j=2$ that is assigned treatment $z_j = 1$, individual $i=1$'s potential outcome could be $Y_{j1}^{(1,D_{j1}^{(1)},D_{j2}^{(1)})} = Y_{j1}^{(1,1,0)}$ if individual $i=1$ actually took the treatment so that $D_{j1}^{(1)} =1 $ and individual $i=2$ did not take the treatment so that $D_{j2}^{(1)} = 0$. Alternatively, individual $i=1$'s potential outcome could be $Y_{j1}^{(1,0,0)}$ if individual $i=1$ refused treatment so that $D_{j1}^{(1)} = 0$. Prior literature's notation would assume these two potential outcomes are equal, $Y_{j1}^{(z_j,0,1)} = Y_{j1}^{(z_j,0,0)}$. In contrast, our notation makes it explicitly clear that interference can exist and if it exists, it originates from heterogeneous noncompliance of each individual unit.

\subsection{Assumptions}
Next, we review the assumptions that are standard in the literature on noncompliance and interference; detailed discussions can be found in \citet{angrist_identification_1996} and \citet{baiocchi_instrumental_2014} (for noncompliance) and \citet{hudgens_toward_2008} (for interference). We invoke this set of assumptions to understand what can be identified from CRTs under extant assumptions.
\begin{itemize}
\item[(A1)] \emph{Cluster Randomized Assignment}. Let $\mathcal{Z}$ be the set that consists of vectors $\mathbf{z} = (z_{1},\ldots,z_{J}) \in \mathcal{B}^J$ such that $m = \sum_{j=1}^{J} z_{j}$ for $0< m < J$. Then, $m$ out of $J$ clusters are randomly assigned treatment and $J -m$ are assigned control. 
\[
P(\mathbf{Z} = \mathbf{z} \mid \mathcal{F}, \mathcal{Z}) = P(\mathbf{Z} = \mathbf{z} \mid \mathcal{Z}) = \frac{1}{{J \choose m}} 
\]
\item[(A2)] \emph{Non-Zero Causal Effect of $Z$ on $D$}. The treatment assignment has a non-zero average causal effect on compliance.
\[
\tau_D = \frac{1}{N} \sum_{j=1}^{J} \sum_{i=1}^{n_{j}} D_{ji}^{(1)} - D_{ji}^{(0)} \neq 0 
\]
\item[(A3)] \emph{Network Exclusion Restriction}. Given individual $i$'s compliance $d_{ji}$ and the compliances of others $\mathbf{d}_{j-i} \in \mathcal{B}^{n_{j} - 1}$, the treatment assignment $z_{j}$ has no impact on the potential outcome of individual $i$.
\[
Y_{ji}^{(1,d_{ji}, \mathbf{d}_{j-i})} = Y_{ji}^{(0,d_{ji}, \mathbf{d}_{j-i})} \equiv Y_{ji}^{(d_{ji}, \mathbf{d}_{j-i})}
\]
\item[(A4)] \emph{Monotonicity}. For every individual $i$ in cluster $j$, we have $D_{ji}^{(0)} \leq D_{ji}^{(1)}$.
\item[(A5)] \emph{Stratified Interference}. Given individual $i$'s compliance $d_{ji}$, his cluster treatment assignment $z_j$, and a whole number $k \in \{0,1,\ldots,n_{j} - 1\}$, the potential outcomes of individual $i$ are equivalent when exactly $k$ of his peers are taking the treatment.
\[
Y_{ji}^{(z_{j}, d_{ji}, \mathbf{d}_{j-i})} = Y_{ji}^{(z_{j}, d_{ji}, \widetilde{\mathbf{d}}_{j-i})} \equiv Y_{ji}^{(z_j, d_{ji}, [k])}, \quad{} \forall \mathbf{d}_{j-i}, \widetilde{\mathbf{d}}_{j-i} \in \mathcal{B}^{n_{j} - 1} \text{ where } k = \sum_{i' \neq i} d_{ji'} = \sum_{i' \neq i} \widetilde{d}_{ji'}
\]
\end{itemize}
We briefly comment on these assumption within the context of the PSDP, our motivating example. Assumption (A1) is approximately satisfied by the design of the PSDP where the deworming intervention was randomized to schools (i.e. clusters); see \citet{miguel2004worms} and \citet{hicks2015commentary} for additional details on the treatment assignment process. Assumption (A1) also allows us to test assumption (A2) in the PSDP by taking the stratified difference-in-means of the observed compliance values $D_{ji}$ between the treated and control clusters; see Section \ref{sec:app} for a numerical illustration. Assumption (A3), like the usual exclusion restriction in the noncompliance literature \citep{angrist_identification_1996}, cannot be tested with data because it requires observing potential outcomes under both treatment $z_{j} = 1$ and control $z_{j} = 0$; typically, subject-matter expertise must be used to justify this assumption. In the PSDP, assumption (A3) implies that the random assignment of the deworming treatments had an effect on the outcome, say a student's infection status, only through students taking the oral medications. If the random assignment induced better hygiene at the school leading to lower infection rates and this is not captured by the treatment receipt values $d_{ji}$ or $\mathbf{d}_{j-i}$, the exclusion restriction would be violated. 

Assumption (A4) can be interpreted by partitioning the study population into four groups, compliers (CO), always-takers (AT), never-takers (NT), and defiers (DF) \citep{angrist_identification_1996}. In the PSDP, compliers are students who follow the  intervention, $D_{ji}^{(1)} = 1$ and $D_{ji}^{(0)} = 0$. Always-takers are students who always take the deworming treatments, irrespective of the intervention, $D_{ji}^{(1)} = D_{ji}^{(0)} = 1$. Never-takers are students who never take the deworming treatments, irrespective of the intervention, $D_{ji}^{(1)} = D_{ji}^{(0)} = 0$. Defiers are the opposite of compliers in that they systematically defy the intervention, $D_{ji}^{(1)} = 0$ and $D_{ji}^{(0)} = 1$. Assumption (A4) implies that there are no defiers in the PSDP population.

In many CRTs, including the PSDP, assumption (A4) can be satisfied by the study design by denying individuals in clusters that are randomized to the control to seek out the treatment; in the PSDP, the students in the control clusters did not have access to the new deworming medications. This is commonly referred to as one-sided noncompliance and is formalized as assumption (A4.1).
\begin{itemize}
\item[(A4.1)] \emph{One-Sided Noncompliance}. For every individual $i$ in cluster $j$, we have $D_{ji}^{(0)} = 0$
\end{itemize}
An implication of assumption (A4.1) is that there are no always-takers and defiers in the study population.

Finally, assumption (A5) states that conditional on individual $i$'s compliance to treatment $d_{ji}$ and his cluster treatment assignment $z_{j}$, individual $i$'s potential outcome $Y_{ji}^{(z_{j}, d_{ji}, \mathbf{d}_{j-i})}$ only depends on the number of individual $i$'s peers who took the treatment, not necessarily who took the treatment. \citet{hudgens_toward_2008} talks about the plausibility of this assumption in practice, especially in infectious disease studies like the PSDP, as well as the statistical importance of having this assumption to estimate standard errors and conduct asymptotics \citep{liu_large_2014}. In our work, we make this assumption not out of technical necessity, but out of better interpretability; see Section \ref{sec:ident} for details. Also, note that combining assumptions (A3) and (A5) leads to the original stratified interference assumption stated in \citet{hudgens_toward_2008}.

\subsection{Review: Causal Effects under Noncompliance and Interference} 
\label{sec:review_effect}
Next, we review two causal estimands of interest when noncompliance or interference are present. First, under noncompliance, but without interference, the potential outcome $Y_{ji}^{(z_j, d_{ji}, \mathbf{d}_{j-i})}$ collapses to $Y_{ji}^{(z_j, d_{ji})}$ and two causal effects of primary interest are the average intent-to-treat (ITT) effects and the ratio of these effects to identify the complier average treatment effect (CACE) \citep{imbens_identification_1994,angrist_identification_1996}. The first ITT effect is with respect to $Y$ and is denoted as $\tau_Y$:
\[
\tau_{Y} = \frac{1}{N} \sum_{j=1}^{J} \sum_{i=1}^{n_{j}} Y_{ji}^{(1,D_{ji}^{(1)})} - Y_{ji}^{(0,D_{ji}^{(0)})}
\]
$\tau_Y$ is the population average effect of treatment assignment, not treatment receipt, on the outcome. The second ITT effect is with respect to $D$ and is denoted as $\tau_D$:
\[
\tau_{D} = \frac{1}{N} \sum_{j=1}^{J} \sum_{i=1}^{n_{j}} D_{ji}^{(Z_{ji}^{(1)})} - D_{ji}^{(Z_{ji}^{(0)})}
\]
$\tau_D$ is also known as the compliance rate. Let $\tau$ be the ratio of the average ITT effect $\tau_Y$ to the average effect of treatment assignment on compliance $\tau_D$, 
\[
\tau = \frac{\tau_Y}{ \tau_D}
\]
where we assume (A2) holds so that $\tau$ is well-defined. Then, under (A2)-(A4) without interference, $\tau$ equals the complier average treatment effect.
\begin{equation}
\label{eq:cace}
\tau = \frac{1}{N_{\rm CO}} \sum_{j=1}^J \sum_{i=1}^{n_j} (Y_{ji}^{(1)} - Y_{ji}^{(0)}) I(\text{$ji$ is CO}) 
\end{equation}
In equation \eqref{eq:cace}, $N_{\rm CO}$ is the total number of compliers in the study population, which is non-zero because of (A2) and (A4). The CACE is also referred to as a local effect because it only represents the treatment effect of a particular subgroup in the population \citep{imbens_identification_1994}. The CACE can be estimated by plugging in estimates of $\tau_Y$ and $\tau_D$ that make up  $\tau$. Specifically, $\tau_Y$ and $\tau_D$ can each be unbiasedly estimated under (A1) by taking the stratified difference-in-means of the observed outcomes $Y_{ji}$ and treatment receipts $D_{ji}$,  respectively, between the treated and control clusters; see Section \ref{sec:est} for details. This ratio estimator is known as the ``Wald'' or the IV estimator, and it is a special case of the two stage least squares (TSLS) estimator in the literature. However, when interference is present, it is unclear whether the ratio can be interpreted as the CACE. We explore this in Section \ref{sec:ident}.

Second, under interference, but with full compliance, the potential outcome $Y_{ji}^{(z_j, d_{ji}, \mathbf{d}_{j-i})}$ collapses to $Y_{ji}^{(d_{ji}, \mathbf{d}_{j-i})}$ and causal effects of interest are the total, direct, and spillover effects, denoted as ${\rm TE}_{ji}$, ${\rm DE}_{ji}$, and ${\rm PE}_{ji}$, respectively \citep{hudgens_toward_2008}. Formally, under stratified interference (A5), given two vectors of treatment receipts $\mathbf{d}_{j-i}, \mathbf{d}_{j-i}' \in \mathcal{B}^{n_{j} - 1}$ and two whole numbers $k_1, k_0 \in \{0,1,\ldots,n_{j}-1\}$ that count the number of $i$'s peers that took the treatment, i.e. $k_1 = \sum_{i' \neq i} d_{ji'}$ and $k_0 = \sum_{i' \neq i} d_{ji'}'$, the individual total, direct, and spillover/peer effects are defined as
\begin{align*}
{\rm TE}_{ji}(1,k_1;0,k_0) = Y_{ji}^{(1,\mathbf{d}_{j-i})} - Y_{ji}^{(0,\mathbf{d}_{j-i}')} \\
{\rm DE}_{ji}(1,k_1; 0,k_1) = Y_{ji}^{(1,\mathbf{d}_{j-i})} - Y_{ji}^{(0,\mathbf{d}_{j-i})} \\
{\rm PE}_{ji}(0,k_1;0,k_0) = Y_{ji}^{(0,\mathbf{d}_{j-i})} - Y_{ji}^{(0,\mathbf{d}_{j-i}')} \\
{\rm PE}_{ji}(1,k_1;1,k_0) = Y_{ji}^{(1,\mathbf{d}_{j-i})} - Y_{ji}^{(1,\mathbf{d}_{j-i}')}
\end{align*}
${\rm TE}_{ji}(1,k_1;0,k_0)$ is individual $i$'s total casual effect if individual $i$ and $k_1$ of his peers took the treatment versus if individual $i$ did not take the treatment while $k_0$ of his/her peers did take the treatment. ${\rm DE}_{ji}(1,k_1;0,k_1)$ is individual $i$'s direct causal effect if individual $i$ took the treatment versus if individual $i$ did not take the treatment and his peers' treatment receipt was fixed at $k_1$, i.e. if $k_1$ of his peers took the treatment. ${\rm PE}_{ji}(0,k_1;0,k_0)$ is individual $i$'s spillover causal effect if $k_1$ of his peers took the treatment versus if $k_0$ of his peers took the treatment and individual $i$'s treatment receipt remained fixed at $d_{ji} = 0$, i.e. individual $i$ did not take the treatment. ${\rm PE}_{ji}(1,k_1;1,k_0)$ is the same as ${\rm PE}_{ji}(0,k_1;0,k_0)$ except individual $i$'s treatment receipt is fixed at $d_{ji} = 1$, i.e. individual $i$ took the treatment. Without noncompliance, the population averages of individual total, direct, and spillover effects can be identified under a two-stage randomization design of \citet{hudgens_toward_2008} where some clusters are randomly allocated to the ``$k_1$ policy'' that assigns $k_1$ individuals in the cluster to treatment while the rest are assigned the ``$k_0$ policy'' where $k_0$ individuals in the cluster are treated and each individual are assigned treatment based on his cluster randomization policy.


\section{Identification}
\subsection{Target Causal Estimands} 
\label{sec:target}
To discuss the results in the paper, we introduce three causal estimands which may be of natural interest in CRTs when both noncompliance and interference are present and to the best of our knowledge, have not been discussed in either literature on noncompliance or interference. Broadly speaking, our three proposed causal estimands are the network estimands in \citet{hudgens_toward_2008} specific to subgroups of compliers, always-takers, and never-takers in \citet{angrist_identification_1996}. We also note that while a richer combinations of these two literatures and their effects are possible, we only discuss three combinations as they are the most relevant in our paper. 

Let $n_j^{\rm AT}$, $n_{j}^{\rm CO}$, and $n_{j}^{NT}$ be the number of always-takers, compliers, and never-takers, respectively, in cluster $j$ and let $N^{\rm AT} = \sum_{j=1}^{J} n_j^{\rm AT}$,  $N^{\rm CO} = \sum_{j=1}^{J} n_j^{\rm CO}$, and $N^{\rm NT} = \sum_{j=1}^J n_j^{\rm NT}$ be the total number of always-takers, compliers, and never-takers, respectively, in the population. For each cluster $j$, let $k_{0j}, k_{1j} \in \{0,1,\ldots,n_j -1\}$, $ k_{0j} < k_{1j}$, indicate two different numbers of peers actually taking the treatment, similar to $k_0$ and $k_1$ in Section \ref{sec:review_effect}. For instance, in a hypothetical cluster of size $n_j = 10$, $k_{0j} = 5$ would indicate that $5$ peers are taking the treatment in cluster $j$ and $k_{1j} = 7$ would indicate that $7$ peers are taking the treatment in cluster $j$. Let $\mathbf{k}_0 = (k_{01},\ldots,k_{0J})$ and $\mathbf{k}_1 = (k_{11},\ldots,k_{1J})$ be collections of $k_{0j}$ and $k_{1j}$, respectively, across all $J$ clusters. We define the population average total effect of treatment on the outcome among compliers (CO)
\begin{equation} \label{eq:avg_TE}
\overline{\rm TE}^{\rm CO}(1,\mathbf{k}_1;0,\mathbf{k}_0) = \frac{1}{N^{\rm CO}} \sum_{j=1}^{J} \sum_{i=1}^{n_{j}} {\rm TE}_{ji}(1,k_{1j};0,k_{0j}) I(\text{ji is CO})
\end{equation}
Each individual total effect in equation \eqref{eq:avg_TE}, i.e. ${\rm TE}_{ji}(1,k_{1j};0,k_{0j}) I(\text{ji is CO})$, is the total effect of complier $i$ when $k_{1j}$ of his peers are taking the treatment versus him not taking the treatment while $k_{0j}$ of his peers are taking the treatment. Note that each cluster has different numbers of peers taking the treatment; this is in contrast with the effects defined by \citet{hudgens_toward_2008} where because the treatment policies are fixed, there are (two) fixed number of people taking the treatment. Also, while individual $i$ is a complier, his peers may be a mixture of compliers, always-takers, and never-takers. In the PSDP, we expect that $\overline{\rm TE}^{\rm CO}(1,\mathbf{k}_1;0,\mathbf{k}_0)  > 0$, that is the deworming treatment has a net positive benefit to complier students who take the assigned medication.

Also, we define the population average spillover effect of treatment on the outcome among always-takers (AT)
\begin{equation} \label{eq:avg_PE1}
\overline{\rm PE}^{\rm AT}(1,\mathbf{k}_1; 1,\mathbf{k}_0) = \frac{1}{N^{\rm AT}} \sum_{j=1}^{J} \sum_{i=1}^{n_{j}} {\rm PE}_{ji}(1,k_{1j};1,k_{0j}) I(\text{ji is AT})
\end{equation}
Equation \eqref{eq:avg_PE1} implicitly assumes that there is at least one always-taker in the population so that $N^{\rm AT} \neq 0$. Each individual spillover effect in equation \eqref{eq:avg_PE1}, i.e. ${\rm PE}_{ji}(1,k_{1j};1,k_{0j}) I(\text{ji is AT})$, is the spillover effect of always-taker $i$ when when $k_{1j}$ versus $k_{0j}$ of always-taker $i$'s peers in cluster $j$ take the treatment while he takes the treatment. Note that similar to equation \eqref{eq:avg_TE}, always-takers' peers may be a mixture of compliers, never-takers, and always-takers. If $\overline{\rm PE}^{\rm AT}(1,\mathbf{k}_{1};1,\mathbf{k}_0) =0$, having additional always-takers' peers take the treatment does not affect, on average, the always-takers' outcomes. When $\overline{\rm PE}^{\rm AT}(1,\mathbf{k}_1;1,\mathbf{k}_0) > 0$, having additional always-takers' peers take the treatment benefits the always-takers' outcomes. In the PSDP, we expect $\overline{\rm PE}^{\rm AT}(1,\mathbf{k}_1;1,\mathbf{k}_0) \geq 0$, that is more always-takers' peers taking the deworming medication is not harmful to the always-taker students.

Finally, we define a parallel effect to $\overline{\rm PE}^{\rm AT}(1,\mathbf{k}_1;1,\mathbf{k}_0)$, the population average spillover effect among never-takers.
\begin{equation} \label{eq:avg_PE0}
\overline{\rm PE}^{\rm NT}(0,\mathbf{k}_1;0,\mathbf{k}_0) =  \frac{1}{N^{\rm NT}} \sum_{j=1}^{J} \sum_{i=1}^{n_j} {\rm PE}_{ji}(0,k_{1j};0,k_{0j}) I(\text{ji is NT})
\end{equation}
Equation \eqref{eq:avg_PE0} implicitly assumes that there is at least one never-taker in the population so that $N^{\rm NT} \neq 0$. Each individual spillover effect in equation \eqref{eq:avg_PE0}, i.e. ${\rm PE}_{ji}(0,k_{1j};0,k_{0j}) I(\text{ji is NT})$, is the spillover effect of never-taker $i$ when $k_{1j}$ versus $k_{0j}$ of never-taker $i$'s peers in cluster $j$ take the treatment while never-taker $i$ does not take the treatment. Like the spillover effect among always-takers in equation \eqref{eq:avg_PE1}, $\overline{\rm PE}^{\rm NT}(0,\mathbf{k}_1;0,\mathbf{k}_0)$ can be thought of as the additional effect on never-takers when more of their peers take the treatment. In the PSDP, the spillover effect among never-takers is likely non-negative, i.e. $\overline{\rm PE}^{\rm NT}(0,\mathbf{k}_1;0,\mathbf{k}_0) \geq 0$, since peers of never-takers who take the deworming treatment reduce the likelihood of infections to everyone, including the never-takers who refuse to take the medication.

The spillover effect among never-takers may be a practically useful estimand because it helps investigators understand how the treatment spills over to individuals who will never take the treatment irrespective of the intervention assignment. For example, in the PSDP, the never-takers students could be students who cannot take the new medication due to side effects or students who are immunocompromised. In vaccine studies, understanding how the peers of never-takers taking the treatment affect the never-takers' outcomes could be useful for informing vaccination policies, say by understanding the effect of herd immunity of vaccines among individuals who refuse (or medically cannot) to take the vaccine. This is in contrast to the usual local estimands in noncompliance, such as the CACE where its significance and merit are debated \citep{Deaton:2010,Imbens:2010,imbens_instrumental_2014,baiocchi_instrumental_2014,Swanson:2014}.

\subsection{Standard IV Analysis Does Not Identify CACE} 
\label{sec:ident}
The first identification result we prove is that the standard IV estimator based on a ratio of the ITT effects as described in Section \ref{sec:review_effect} does not identify the CACE when interference is present. 
\begin{theorem} \label{thm:gen} Suppose assumptions (A1)-(A5) hold and suppose there is at least one complier in cluster $j$. Then, the ratio of $\tau_Y$ to $\tau_D$, $\tau$, is a mixture of causal effects among compliers, always-takers and never-takers.\begin{align} \label{eq:tau_int}
\tau &= \overline{\rm TE}^{\rm CO}(1,\mathbf{k}_{1} - \mathbf{1}; 0,\mathbf{k}_{0})  + \frac{N^{\rm AT}}{N^{\rm CO}}\cdot \overline{\rm PE}^{\rm AT}(1,\mathbf{k}_{1} - \mathbf{1}; 1,\mathbf{k}_{0} - \mathbf{1})+ \frac{N^{\rm NT}}{N^{\rm CO}} \cdot \overline{\rm PE}^{\rm NT} (0,\mathbf{k}_{1}; 0,\mathbf{k}_{0}) \end{align}
where $\mathbf{k}_1 = (n_{1}^{\rm AT} + n_{1}^{\rm CO}, \ldots, n_{J}^{\rm AT} + n_{J}^{\rm CO})$ and $\mathbf{k}_{0} = (n_{1}^{\rm AT},\ldots, n_{J}^{\rm AT})$
\end{theorem}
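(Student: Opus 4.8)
The plan is to reduce the identity to a stratum-by-stratum expansion of the population quantities $\tau_D$ and $\tau_Y$. First I would pin down the two estimands: $\tau_D = \frac{1}{N}\sum_{j,i}\big(D_{ji}^{(1)} - D_{ji}^{(0)}\big)$, and, since the outcome observed in a cluster assigned $z_j$ equals $Y_{ji}^{(z_j,\, D_{ji}^{(z_j)},\, \mathbf{D}_{j-i}^{(z_j)})}$, the numerator of the Wald ratio targets the interference-aware ITT effect $\tau_Y = \frac{1}{N}\sum_{j,i}\big(Y_{ji}^{(1,\, D_{ji}^{(1)},\, \mathbf{D}_{j-i}^{(1)})} - Y_{ji}^{(0,\, D_{ji}^{(0)},\, \mathbf{D}_{j-i}^{(0)})}\big)$; both are unbiasedly estimable by stratified differences-in-means under (A1). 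By monotonicity (A4) every $(j,i)$ is a complier, always-taker, or never-taker, and only compliers contribute to $\tau_D$ (each by exactly $1$), so $\tau_D = N^{\rm CO}/N$, which is positive since the hypothesis guarantees a complier (and equivalently (A2) gives $\tau_D\ne 0$).

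The core of the argument is bookkeeping the treated-peer count for a unit $(j,i)$ under each assignment. Under $z_j=1$, (A4) implies a peer $i'\ne i$ receives treatment iff $i'$ is a complier or an always-taker, so that count is $n_j^{\rm CO} + n_j^{\rm AT}$, less one precisely when $i$ itself is a complier or an always-taker; under $z_j=0$ a peer receives treatment iff it is an always-taker, giving $n_j^{\rm AT}$, less one precisely when $i$ is an always-taker. Writing $k_{1j} = n_j^{\rm CO} + n_j^{\rm AT}$ and $k_{0j} = n_j^{\rm AT}$, collapsing the potential outcomes through stratified interference (A5), and dropping the assignment argument through the network exclusion restriction (A3), the contribution of $(j,i)$ to $N\tau_Y$ becomes $Y_{ji}^{(1,[k_{1j}-1])} - Y_{ji}^{(0,[k_{0j}])} = {\rm TE}_{ji}(1,k_{1j}-1;0,k_{0j})$ for a complier, $Y_{ji}^{(1,[k_{1j}-1])} - Y_{ji}^{(1,[k_{0j}-1])} = {\rm PE}_{ji}(1,k_{1j}-1;1,k_{0j}-1)$ for an always-taker, and $Y_{ji}^{(0,[k_{1j}])} - Y_{ji}^{(0,[k_{0j}])} = {\rm PE}_{ji}(0,k_{1j};0,k_{0j})$ for a never-taker. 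Summing within each stratum and identifying the sums with the definitions in \eqref{eq:avg_TE}, \eqref{eq:avg_PE1}, and \eqref{eq:avg_PE0} gives $N\tau_Y = N^{\rm CO}\,\overline{\rm TE}^{\rm CO}(1,\mathbf{k}_1-\mathbf{1};0,\mathbf{k}_0) + N^{\rm AT}\,\overline{\rm PE}^{\rm AT}(1,\mathbf{k}_1-\mathbf{1};1,\mathbf{k}_0-\mathbf{1}) + N^{\rm NT}\,\overline{\rm PE}^{\rm NT}(0,\mathbf{k}_1;0,\mathbf{k}_0)$, and dividing through by $N\tau_D = N^{\rm CO}$ yields \eqref{eq:tau_int}.

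I expect the main obstacle to be exactly this peer-count bookkeeping and its alignment with the arguments of ${\rm TE}$ and ${\rm PE}$: the shifts $\mathbf{k}_1-\mathbf{1}$ and $\mathbf{k}_0-\mathbf{1}$ appearing in the statement are precisely the corrections for the self-exclusion ``$-1$'', i.e.\ for the fact that a complier (resp.\ always-taker) is itself one of the units whose cluster-level count is being invoked, whereas a never-taker is never among the treated units, which is why its term keeps the unshifted $\mathbf{k}_1,\mathbf{k}_0$. A secondary point that needs care is the always-taker's control-arm outcome: since that unit takes treatment even when $z_j=0$, one must use (A3) to equate $Y_{ji}^{(0,1,[k_{0j}-1])}$ with $Y_{ji}^{(1,[k_{0j}-1])}$ so that the difference is genuinely a peer effect and not confounded by the assignment $z_j$.
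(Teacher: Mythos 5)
Your proposal is correct and follows essentially the same route as the paper's proof: decompose each unit's contribution to $\tau_Y$ by compliance type using (A3) and (A4), use (A5) to reduce the peer treatment-receipt vectors to the counts $k_{1j}=n_j^{\rm AT}+n_j^{\rm CO}$ and $k_{0j}=n_j^{\rm AT}$ (with the self-exclusion ``$-1$'' exactly as you describe), and divide by $N\tau_D = N^{\rm CO}$. Your bookkeeping of the shifted arguments for compliers and always-takers versus the unshifted ones for never-takers matches the paper's argument, and is in fact spelled out more explicitly than in the published proof.
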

Theorem \ref{thm:gen} highlights that when both noncompliance and interference are present in a CRT,  the standard Wald estimator must be interpreted as a mixture of treatment effects from different subgroups. Each effect in the mixture is associated with a particular subgroup, such as the spillover effects being associated with always-takers or never-takers instead of compliers. This is because, by definition, spillover effects fix individuals' own treatment assignments, but vary the peers' treatment assignments. Hence, to observe spillover effects under noncompliance, individuals must always take the treatment (or control) irrespective of the intervention assignment and only always-takers and never-takers have this trait. In contrast, the definition of total effects vary individuals and their peers' treatment assignments. To observe total effects under noncompliance, individuals must comply with their treatment assignments so that there is variation in their treatment receipts and only compliers have this trait.

Intuitively, the result in Theorem \ref{thm:gen} is based on the fact that under interference, the noncompliers' potential outcomes, i.e. the never-takers and the always-takers' outcomes, are affected by the behavior of their complier peers. For example, for never-takers in cluster $j$, if their cluster $j$ is assigned treatment, there are $n_{j}^{\rm CO} + n_j^{\rm AT}$ peers of never-takers taking the treatment and if their cluster $j$ is assigned control, there are $n_{j}^{\rm AT}$ peers of them taking the treatment. The never-takers potential outcomes under these two randomization arms are different because more peers are taking the treatment when the cluster is randomized to treatment than control. Consequently, the difference in their potential outcomes between treatment and control is non-zero and remain as spillover effects $\overline{\rm PE}^{\rm NT} (0,\mathbf{k}_{1}; 0,\mathbf{k}_{0})$ in the expression for $\tau$. In contrast, without interference, the never-takers' potential outcomes do not depend on their peers' treatment receipt. Therefore, their outcomes when the cluster is assigned treatment or control is always the outcome when they themselves do not take the treatment. This means that the difference in the two potential outcomes will be zero and the never-takers' treatment effects do not appear in $\tau$ of Theorem \ref{thm:gen}. A formal argument is in the Appendix, where we also prove Theorem \ref{thm:gen} with and without stratified interference (A5).

We also take a moment to relate our results to those in the literature on interference and noncompliance. First, a key difference between the population effects in Theorem \ref{thm:gen} and those in \citet{hudgens_toward_2008} is that \citet{hudgens_toward_2008} studied contrasts between two fixed treatment policies applied to clusters, say treatment policy $1$ that randomly assigns 50\% of individuals in a cluster to treatment, and treatment policy $2$ that randomly assigns 30\% of individuals in a cluster to treatment. Additionally, their population average effects averaged over all individuals in the study. In our setting, because of noncompliance, each cluster may have different numbers of people who (non-randomly) take the treatment and thus, the number of individuals actually taking the treatment varies across clusters. Also, our population average effects  $\overline{\rm TE}^{\rm CO}(1,\mathbf{k}_{1} - \mathbf{1}; 0,\mathbf{k}_{0})$, $\overline{\rm PE}^{\rm AT} (0,\mathbf{k}_{1}; 0,\mathbf{k}_{0})$, and $\overline{\rm PE}^{\rm NT} (0,\mathbf{k}_{1}; 0,\mathbf{k}_{0})$ are averaged over subgroups in the population. Our population average effects become those in \citet{hudgens_toward_2008} if everyone in the study is a complier; under this case, our population average effects would become contrasts between two treatment policies, one where everyone in a cluster is treated and one where everyone is not treated, and $\tau$ reduces to the population average total effect of \citet{hudgens_toward_2008}. Second, Theorem \ref{thm:gen} can be seen as a generalization of the classical identification results of CACE \citep{imbens_identification_1994,angrist_identification_1996} that allows for interference. In particular, without interference, the spillover effects in $\tau$ equal zero, the total effect is the direct effect \citep{hudgens_toward_2008} and $\tau$ in Theorem \ref{thm:gen} reduces to the CACE.

Finally, the peers' ``observable''  values of $Y$ in a CRT with interference do not involve the never-takers $n_{j}^{\rm NT}$ when stratified interference (A5) holds. In fact, the number of compliers $n_j^{\rm CO}$ is a key quantity in the spillover effects in Theorem \ref{thm:gen}, as having zero compliers would lead to zero spillover effects. This is similar in spirit to the non-interference case in a CRT where the compliers play a pivotal role in defining the effect of the treatment receipt \citep{angrist_identification_1996}. Under interference, the compliers also drive the effect of the treatment receipt by being \emph{influencers} among their peers and changing the number of compliers may enlarge the contrasts $k_{1j}$ and $k_{0j}$ in the estimands under a CRT. In contrast, the number of always-takers $n_j^{\rm AT}$ provide a ``baseline'' for the contrast in the spillover or the total effects and changing it does not change the said contrasts.

Corollary \ref{cor:1} states the form of $\tau$ under one-sided noncompliance in assumption (A4.1).
\begin{corollary} \label{cor:1} Suppose the assumptions in Theorem \ref{thm:gen} hold except we replace (A4) with (A4.1). Then,  $\tau$ becomes
\begin{align} \label{eq:tau_1sided}
\tau &=  \overline{\rm TE}^{\rm CO}(1,\mathbf{k}_{1} - \mathbf{1}; 0,\mathbf{k}_{0})  +  \frac{N^{\rm NT}}{N^{\rm CO}} \cdot \overline{\rm PE}^{\rm NT} (0,\mathbf{k}_{1}; 0,\mathbf{k}_{0}) 
\end{align}
where $\mathbf{k}_1 = (n_1^{\rm CO},\ldots,n_{J}^{\rm CO})$ and $\mathbf{k}_{0} = \mathbf{0}$.
\end{corollary}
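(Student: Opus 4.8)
The plan is to obtain Corollary \ref{cor:1} as a direct specialization of Theorem \ref{thm:gen}: one-sided noncompliance simply forces two of the three mixture components in \eqref{eq:tau_int} to be trivial. First I would check that (A4.1) implies (A4). Since $D_{ji}^{(0)} = 0$ and $D_{ji}^{(1)} \in \{0,1\}$, we have $D_{ji}^{(0)} \le D_{ji}^{(1)}$ for every $i$ in every cluster $j$, so all hypotheses of Theorem \ref{thm:gen} remain in force and the decomposition \eqref{eq:tau_int} applies verbatim.

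Next I would translate (A4.1) into the principal-strata counts appearing in \eqref{eq:tau_int}. An individual with $D_{ji}^{(0)} = 0$ can be neither an always-taker nor a defier (both require $D_{ji}^{(0)} = 1$), so $n_j^{\rm AT} = 0$ for all $j$ and hence $N^{\rm AT} = \sum_{j=1}^J n_j^{\rm AT} = 0$. Substituting $n_j^{\rm AT} = 0$ into the formulas $\mathbf{k}_1 = (n_1^{\rm AT} + n_1^{\rm CO}, \ldots, n_J^{\rm AT} + n_J^{\rm CO})$ and $\mathbf{k}_0 = (n_1^{\rm AT}, \ldots, n_J^{\rm AT})$ from Theorem \ref{thm:gen} yields exactly $\mathbf{k}_1 = (n_1^{\rm CO}, \ldots, n_J^{\rm CO})$ and $\mathbf{k}_0 = \mathbf{0}$, as claimed; the hypothesis that every cluster has at least one complier (inherited from Theorem \ref{thm:gen}) guarantees $n_j^{\rm CO} \ge 1$, so $\mathbf{k}_1 - \mathbf{1}$ has non-negative entries and the complier total effect $\overline{\rm TE}^{\rm CO}(1, \mathbf{k}_1 - \mathbf{1}; 0, \mathbf{0})$ is well-defined.

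The only point needing care is the always-taker term $\frac{N^{\rm AT}}{N^{\rm CO}} \overline{\rm PE}^{\rm AT}(1, \mathbf{k}_1 - \mathbf{1}; 1, \mathbf{k}_0 - \mathbf{1})$ in \eqref{eq:tau_int}: since $\overline{\rm PE}^{\rm AT}$ is defined by dividing by $N^{\rm AT}$, the product is a $0/0$ expression when $N^{\rm AT} = 0$ and must be read through its unnormalized form. By definition \eqref{eq:avg_PE1}, $N^{\rm AT} \overline{\rm PE}^{\rm AT}(1, \mathbf{k}_1 - \mathbf{1}; 1, \mathbf{k}_0 - \mathbf{1}) = \sum_{j=1}^J \sum_{i=1}^{n_j} {\rm PE}_{ji}(1, k_{1j} - 1; 1, k_{0j} - 1) I(\text{ji is AT})$, and every indicator $I(\text{ji is AT})$ vanishes under (A4.1), so this sum — and hence the entire always-taker contribution to $\tau$ — is identically zero. (Equivalently, one could re-trace the proof of Theorem \ref{thm:gen} and observe that the always-taker bookkeeping terms never arise once there are no always-takers.) Dropping the vanishing term and inserting the reduced vectors $\mathbf{k}_0 = \mathbf{0}$, $\mathbf{k}_1 = (n_1^{\rm CO}, \ldots, n_J^{\rm CO})$ collapses \eqref{eq:tau_int} to \eqref{eq:tau_1sided}. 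I expect no genuine obstacle here: the argument is bookkeeping, and the single delicate step is cosmetic — justifying that the vacuous always-taker average contributes nothing rather than being undefined.
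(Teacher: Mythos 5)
Your proposal is correct and matches the paper's own (very terse) argument, which simply notes at the end of the proof of Theorem \ref{thm:gen} that the one-sided case follows by ``removing the ATs'' from the decomposition. Your extra care in reading the vanishing always-taker term through its unnormalized sum, so that the $0/0$ average causes no trouble, is a welcome clarification but does not change the route.
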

We end the section by briefly discussing the relationship between instrument strength and our identification result. Broadly speaking, an instrument is strong when there are more compliers than non-compliers so that $N^{\rm NT} / N^{\rm CO} \approx 0$, $N^{\rm AT} / N^{\rm CO} \approx 0$ and $\tau_D$ is far away from $0$. When the instrument is strong, $\tau$ in Theorem \ref{thm:gen} mainly consists of compliers, specifically their total effects. However, if the instrument is weak and there are more never-takers or always-takers than compliers (i.e. $N^{\rm NT} / N^{\rm CO}$ or $N^{\rm AT} / N^{\rm CO}$ are far away from zero), $\tau$ predominantly represents the always-takers or never-takers in the population, and the effects among compliers may make up a small portion of $\tau$. However, this general observation won't hold when each subgroup has non-comparable magnitudes of treatment effects, such as if the compliers have substantially larger treatment effects than non-compliers.

\subsection{Unbiased Estimation of Network Effects is Generally Impossible} 
\label{sec:imp}

Thus far, we proved that the standard IV estimate is a mixture of network treatment effects. Next, we address whether any analysis with the observed data, not just the IV analysis in the prior section, can be informative about these network treatment effects. 

Formally, consider again the estimands defined in Section \ref{sec:target}, the total effect among compliers and the spillover effects among always-takers and never-takers. Unbiased estimators exist for the denominators of these quantities, which are the number of compliers, always-takers, and never-takers; see Section \ref{sec:est} for details. Hence, to learn more about total and spillover treatment effects from data, we need to be able to estimate the numerator of these quantities, which are the sums of individual total and spillover effects among their respective subgroups.
\begin{align} \label{eq:te_sum}
\overline{\rm TE}_{\rm sum}^{\rm CO}(1,\mathbf{k}_{1} - \mathbf{1}; 0,\mathbf{k}_{0})  &= \sum_{j=1}^{J} \sum_{i=1}^{n_j} {\rm TE}_{ji}(1,n_{j}^{\rm AT} + n_{j}^{\rm CO} -1;0,n_{j}^{\rm AT}) I(\text{$ji$ is CO}) \\
\overline{\rm PE}_{\rm sum}^{\rm AT}(1,\mathbf{k}_{1} - \mathbf{1}; 1,\mathbf{k}_{0} - \mathbf{1}) &=\sum_{j=1}^{J}  \sum_{i=1}^{n_j}  {\rm PE}_{ji}(1,n_{j}^{\rm AT} + n_{j}^{\rm CO} -1;1,n_{j}^{\rm AT} - 1) I(\text{$ji$ is AT})  \\
\overline{\rm PE}_{\rm sum}^{\rm NT} (0,\mathbf{k}_{1}; 0,\mathbf{k}_{0})  &= \sum_{j=1}^{J} \sum_{i=1}^{n_j} {\rm PE}_{ji}(0,n_{j}^{\rm AT} + n_{j}^{\rm CO};0,n_{j}^{\rm AT}) I(\text{$ji$ is NT}) \label{eq:pe0_sum} 
\end{align}
Here, we denote $\mathbf{k}_1 = (n_{1}^{\rm AT} + n_{1}^{\rm CO},\ldots,n_{J}^{\rm AT} + n_{J}^{\rm CO})$ and $\mathbf{k}_{0} = (n_{1}^{\rm AT},\ldots,n_{J}^{\rm AT})$. We focus on these specific forms of $\mathbf{k}_1$ and $\mathbf{k}_0$ because for a given study, they are the only ones that can be observed. Theorem \ref{thm:imp} shows that no unbiased estimators exist for these sums in CRTs with noncompliance and interference.
\begin{theorem} \label{thm:imp} Suppose the assumptions in Theorem \ref{thm:gen} or Corollary \ref{cor:1} hold. Then, there does not exist unbiased estimators for sums in equations \eqref{eq:te_sum}-\eqref{eq:pe0_sum} from the observed data $\mathbf{Y}, \mathbf{D}$, and $\mathbf{Z}$.
\end{theorem}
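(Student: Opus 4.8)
The plan is to argue by contradiction. Suppose $g=g(\mathbf Y,\mathbf D,\mathbf Z)$ satisfies $E_{\mathcal F}[g]=\theta(\mathcal F)$ for every collection $\mathcal F$ of potential outcomes obeying the stated hypotheses, where $\theta$ is one of the sums in \eqref{eq:te_sum}--\eqref{eq:pe0_sum}. I would produce two admissible configurations $\mathcal F_0,\mathcal F_1$ for which unbiasedness forces $E_{\mathcal F_0}[g]-E_{\mathcal F_1}[g]=\theta(\mathcal F_0)-\theta(\mathcal F_1)$, arranging that the left-hand side is a constant while the right-hand side is a nonconstant affine function of a single free real parameter --- a contradiction. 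The structural fact that powers this is that in a CRT a cluster is assigned wholly to treatment or wholly to control, so within one realization its treatment-receipt vector is seen under only one arm: in a treated cluster the always-takers are indistinguishable from compliers, and in a control cluster the never-takers are indistinguishable from compliers (and, under (A4.1), everyone in a control cluster has $D=0$). The subgroup whose membership is masked this way is exactly the one whose counterfactual outcome, at the realized level of peer treatment, is being summed in each $\theta$.

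The core device is a single-unit relabelling. Fix a cluster $j^\star$ holding enough units of each compliance type that both members of the pair below stay admissible, and single out a unit $\ell\in j^\star$. Let $\mathcal F_1$ be obtained from $\mathcal F_0$ by changing only $\ell$'s compliance type, with every potential-outcome value left untouched; the interchange is tailored to $\theta$: AT$\leftrightarrow$CO when $\theta=\overline{\rm PE}_{\rm sum}^{\rm AT}$, and NT$\leftrightarrow$CO when $\theta$ is $\overline{\rm TE}_{\rm sum}^{\rm CO}$ or $\overline{\rm PE}_{\rm sum}^{\rm NT}$. The key observation is that one entire arm of $j^\star$ then yields identical observed data under $\mathcal F_0$ and $\mathcal F_1$. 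For an AT$\leftrightarrow$CO relabelling, in any treated assignment the set of treatment-takers in $j^\star$ is ${\rm CO}\cup{\rm AT}$, which is unchanged, so every realized interference level and every realized outcome there is unchanged; for an NT$\leftrightarrow$CO relabelling, in any control assignment the always-taker count of $j^\star$ --- which alone governs all realized interference levels in a control cluster --- is unchanged, and $\ell$ still has $D=0$ at the same realized level. Since $\mathcal F_0$ and $\mathcal F_1$ also agree on every other cluster, $E_{\mathcal F_0}[g]-E_{\mathcal F_1}[g]$ collapses to a sum over only the assignments that place $j^\star$ in the complementary arm.

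Now introduce the free parameter $v$: take $v=Y_{j^\star\ell}^{(1,\,n_{j^\star}^{\rm AT}+n_{j^\star}^{\rm CO}-1)}$ when $\theta=\overline{\rm PE}_{\rm sum}^{\rm AT}$ and $v=Y_{j^\star\ell}^{(0,\,n_{j^\star}^{\rm AT})}$ otherwise. On one hand, $v$ enters the observed data only when $j^\star$ is placed in the arm shown to be invariant: in the complementary arm $\ell$'s realized own-receipt/peer-count pair is strictly different (this is where one needs $j^\star$ to contain at least one further complier besides $\ell$), and no other unit or cluster carries $Y_{j^\star\ell}^{(\cdot)}$; hence $E_{\mathcal F_0}[g]-E_{\mathcal F_1}[g]$, supported on the complementary arm, does not depend on $v$. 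On the other hand $\theta(\mathcal F_0)-\theta(\mathcal F_1)=\pm v+(\text{terms not involving }v)$: only $\ell$'s subgroup membership flips, so its summand (which contains $\pm v$ at the decisive level together with a $v$-free remainder) appears in exactly one of $\mathcal F_0,\mathcal F_1$, and --- using the invariance of the relevant interference levels --- the summands of all other units shift, if at all, only by amounts not involving $v$. Equating the two expressions and letting $v$ range over $\reals$ gives the contradiction, and this rules out an unbiased estimator for each of the three sums, under both the Theorem~\ref{thm:gen} hypotheses and the Corollary~\ref{cor:1} hypotheses.

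I expect the invariance claim of the second paragraph to be the main obstacle: one must rig $(\mathcal F_0,\mathcal F_1)$ so that an entire arm of the distinguished cluster produces literally identical observed data while $\theta$ still moves, and it is this requirement that pins down which two of the four compliance types may be interchanged and which potential outcome may play the role of $v$. It is also precisely here that the defining feature of a CRT --- randomizing whole clusters, hence never exposing within-cluster treatment receipt under both arms --- does the work. The remaining points are routine: verifying that both $\mathcal F_0$ and $\mathcal F_1$ satisfy (A1)--(A5) or (A1)--(A3),(A4.1),(A5), in particular (A2) and the nonvacuousness of the subgroup appearing in $\theta$, which is secured by giving $j^\star$ a couple of spare units of each needed type; and, optionally, cleaning up the ``$v$-free'' remainder for $\overline{\rm PE}_{\rm sum}^{\rm AT}$ and $\overline{\rm PE}_{\rm sum}^{\rm NT}$ by first subtracting from $g$ an explicit Horvitz--Thompson estimator of the estimable half (for instance $\tfrac{J}{m}\sum_{j:Z_j=1}\sum_{i:D_{ji}=0}Y_{ji}$ for the first summand of $\overline{\rm PE}_{\rm sum}^{\rm NT}$), after which $\theta(\mathcal F_0)-\theta(\mathcal F_1)=\pm v$ exactly.
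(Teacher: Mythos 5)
Your proposal is correct and follows essentially the same route as the paper's proof: both argue by contradiction via a compliance-type relabelling within a single cluster that leaves one treatment arm's observed data literally unchanged, so that the difference in expectations of the putative unbiased estimator is supported on the other arm while the difference in estimands contains a potential outcome that never appears in that arm's data. The only cosmetic difference is that the paper swaps the types of \emph{two} units (keeping the subgroup counts, and hence $\mathbf{k}_0$ and $\mathbf{k}_1$, fixed) whereas you relabel a single unit and absorb the resulting shift in the other units' interference levels into the $v$-free remainder.
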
 
Broadly speaking, the result of Theorem \ref{thm:imp} comes from the fact that for an unbiased estimator $T$ to exist for the three local estimands, the unbiased estimator must be able to classify all individuals into the three compliance types only based on the observed data. But, this is impossible, since it requires observing both potential outcomes $D_{ji}^{(1)}$ and $D_{ji}^{(0)}$. In contrast, without interference, the unbiased estimator for the CACE does not need to classify individuals, since we are only able to estimate the compliers' effect; specifically, without interference, only the compliers' effect remains in the ITT effect \citep{angrist_identification_1996}. The proof of Theorem \ref{thm:imp} states this more precisely, stating that an unbiased estimator, if it were to exist, must be able to differentiate individuals between different compliance types from the observed data, which is not possible with all values of the parameter space.  We remark that a variation of Theorem 1 in \citet{basse2018limitations} can also be used prove our Theorem \ref{thm:imp}.

As a negative result, Theorem \ref{thm:imp} highlights that traditional CRTs have fundamental limitations as experimental designs to elucidate network causal effects under noncompliance and interference and a need for either alternative experimental designs or stronger structural assumptions. The result is similar in spirit to negative results in \citet{manski1993identification},\citet{shalizi2011homophily}, \citet{manski2013identification}, and \citet{basse2018limitations} in that additional assumptions are needed to estimate effects under noncompliance and interference. Indeed, even with stratified interference (A5), which can be argued as a strong assumption primarily to estimate variances of estimators, we are unable to identify the sums of the total or spillover effects. We remark that our result differs from classic results in the IV literature that states that moments do not exist for ratio estimators since our focus in Theorem \ref{thm:imp} is on the numerator;  see Lemma 5 of \citet{nelson1990some} and references therein.

Finally, while Theorem \ref{thm:imp} proves that total and spillover effects are generally impossible to unbiasedly estimate under assumptions (A1)-(A5), they do not rule of local conditions that allow unbiased estimation.  For example, if all the subjects  are compliers, then we can estimate $\overline{\rm TE}^{\rm CO}(1;0)$ from data by using the standard IV method since the terms that are associated with non-compliers and that make up $\tau$ in Theorem \ref{thm:gen} go away. Similarly, under one-sided noncompliance, when all subjects are never-takers, we can estimate $\overline{\rm PE}^{\rm NT}(0)$ by using the standard IV method. These are two specific, local data generating scenarios that allow identification. However, in practice,  such scenarios are rare.

\subsection{Nonparametric Bounds under the Assumption of Non-Negative Treatment Effects} 
\label{sec:bound}
The results in Sections \ref{sec:ident} and \ref{sec:imp} imply analyzing a CRT with noncompliance and interference for network treatment effects is hopeless and one should focus on ITT effects. This pessimism is not entirely warranted and we show that in such CRTs with one-sided noncompliance, we can use one additional assumption to calculate informative bounds on the total effect among compliers and the spillover effect among never-takers.  

The assumption we invoke is one of non-negative treatment effects. Formally, a treatment is not harmful, or has non-negative total and spillover effects, if the following hold.
\begin{itemize}
\item[(A6)] \emph{Non-Negative Treatment Effects}. For any $0 \leq  k_{0} <  k_1 \leq n_j-1$ and individual $i$ in cluster $j$, we have
\begin{equation} 
0 \leq {\rm TE}_{ji}(1,k_1;0,k_0), \quad{} 0\leq {\rm PE}_{ji}(0,k_1; 0, k_0)
\end{equation}
\end{itemize}
We note that assumption (A6) is similar to an assumption used by \citet{choi_estimation_2017} about treatment monotonicity. In the PSDP, assumption (A6) is reasonable because it is unlikely that exposure to deworming treatments would increase the presence of infection. 

Let $y = \overline{\rm TE}^{\rm CO}(1,\mathbf{k}_1;0, \mathbf{k}_0)$ and $x = \overline{\rm PE}^{\rm NT}(0,\mathbf{k}_1; 0, \mathbf{k}_0)$. Then, under one-sided compliance, we can write $\tau$ in equation \eqref{eq:tau_1sided} as
\begin{equation} \label{eq:y-x}
y = \tau - \frac{N^{\rm NT}}{N^{\rm CO}} x
\end{equation}
Equation \eqref{eq:y-x} suggests an inverse relationship between the total average effect among compliers and the spillover average effect among never-takers. Specifically, as the spillover average effect becomes positive, the total average effect must become negative. Moreover, under (A6), $0\leq y $ and $0 \leq x$, which, along with equation \eqref{eq:y-x}, gives us bounds on both the total and the spillover effect.
\begin{equation} \label{eq:bound}
0 \leq  \overline{\rm TE}^{\rm CO}(1,\mathbf{k}_1;0, \mathbf{k}_0) \leq \tau, \quad{} 0\leq \overline{\rm PE}^{\rm NT}(0,\mathbf{k}_1; 0, \mathbf{k}_0)\leq \tau \frac{N^{\rm CO}}{N^{\rm NT}} 
\end{equation}
The boundaries of the bounds in equation \eqref{eq:bound} are achieved when either effect is at the extreme. For example, $\overline{\rm TE}^{\rm CO}(1,\mathbf{k}_1;0, \mathbf{k}_0)$ reaches its lower bound of $0$ if $\overline{\rm PE}^{\rm NT}(0,\mathbf{k}_1; 0, \mathbf{k}_0) = \tau \frac{N^{\rm CO}}{N^{\rm NT}}$ and $\overline{\rm PE}^{\rm NT}(0,\mathbf{k}_1; 0, \mathbf{k}_0)$ reaches its lower bound of $0$ if $\overline{\rm TE}^{\rm CO}(1,\mathbf{k}_1;0, \mathbf{k}_0)= \tau$. 

If the outcome is binary, we can impose tighter constraints on both equations \eqref{eq:y-x} and \eqref{eq:bound}
\begin{equation} 
\label{eq:y-x2}
y = \tau - \frac{N^{\rm NT}}{N^{\rm CO}} x, \quad{} 0\leq y \leq 1, 0 \leq x \leq 1
\end{equation}
\noindent which leads to
\begin{align} \label{eq:bound2}
\max\left(0,\tau - \frac{N^{\rm NT}}{N^{\rm CO}} \right) &\leq  \overline{\rm TE}^{\rm CO}(1,\mathbf{k}_1;0, \mathbf{k}_0) \leq \min(1,\tau) \\ 
\max\left(0,\frac{N^{\rm CO}}{N^{\rm NT}} (\tau-1)\right)&\leq\overline{\rm PE}^{\rm NT}(0,\mathbf{k}_1; 0, \mathbf{k}_0) \leq  \min\left(1, \tau \frac{N^{\rm CO}}{N^{\rm NT}} \right) \label{eq:bound2.1}
\end{align}
\noindent In equation \eqref{eq:bound2.1}, if the number of compliers exceeds the number of never-takers, this tightens the lower bound for the total effect. However, if the number of never-takers increases and exceeds the number of compliers, we have a tighter upper bound for the spillover effect. This is consistent with Theorem \ref{thm:gen} where $\tau$ is a weighted mixture of subgroup effects and the weights are proportional to the number of individuals in each subgroup.

Figure \ref{fig:1} summarizes the characteristics of the bounds in equations \eqref{eq:bound2} and \eqref{eq:bound2.1} by plotting the range of spillover and total effects as a function of the proportion of compliers $N^{\rm CO}/N$ and $\tau$.
\begin{figure}[h!]
\begin{center}
\includegraphics[width=\textwidth]{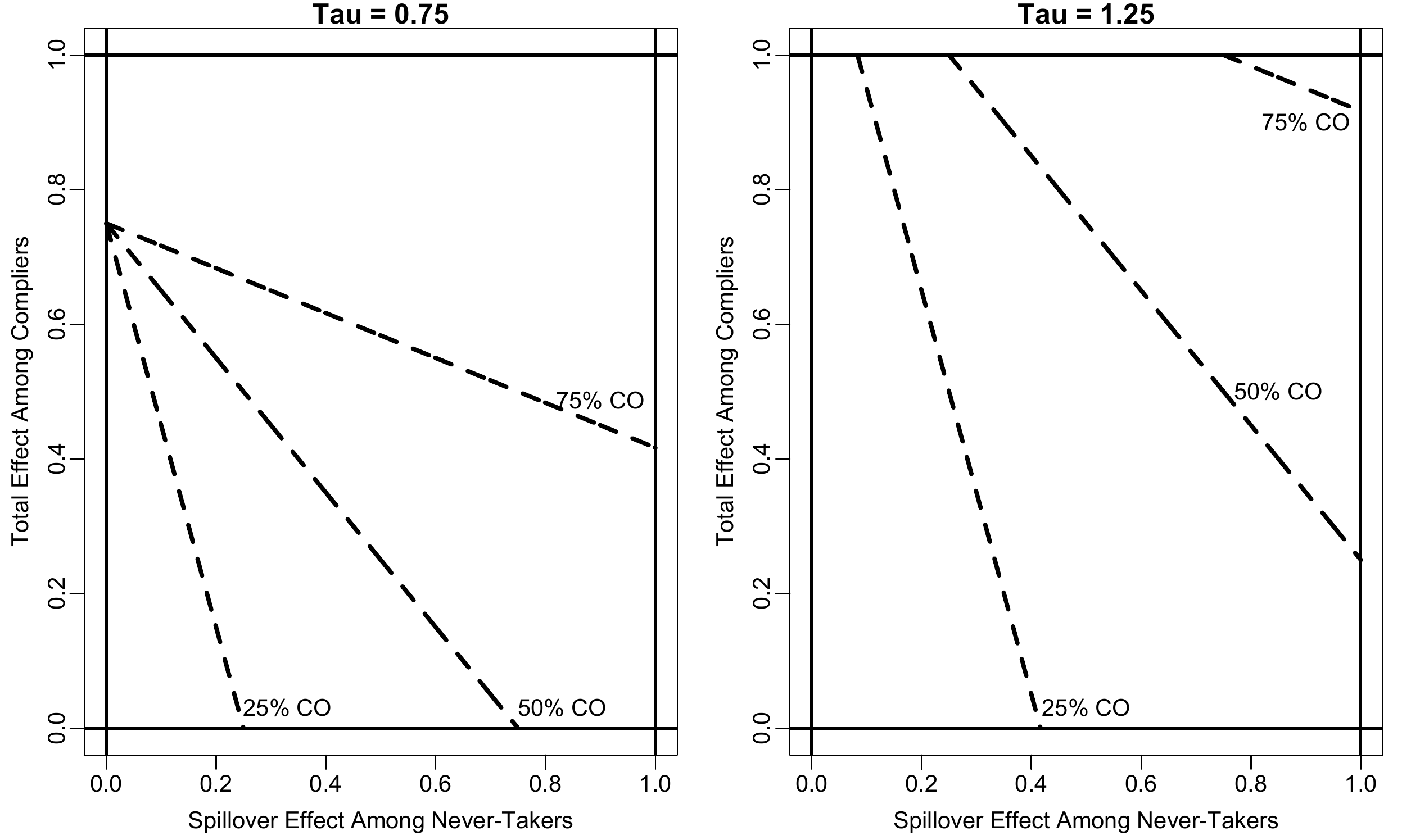}
\end{center}
\caption{A Numerical Example of Bounding Total Effect Among Compliers and Spillover Effect Among Never-Takers in a One-Sided Noncompliance CRT with Binary Outcomes.}
\label{fig:1}
\end{figure}
For example, when $\tau = 0.75$, as the number of compliers increase, the plausible range of the total effect among compliers decreases. However, as the number of compliers decrease, $\tau$ becomes more representative of the never-takers and hence, the range of plausible values for the spillover effect is shorter and range of plausible values for the total effect is wider. Note that the upper bound on the total effect is $0.75$ and it has a non-trivial (i.e. non-zero) lower-bounded if the compliance rate is high. 

When $\tau = 1.25$, the spillover effect among never-takers is always bounded away from zero; more generally, if $\tau > 1$, the lower bound on the spillover effect is always bounded away from zero. The upper bound depends on the compliance rate. For example, if the compliance rate is high, say 75\%, there are tight bounds for both the spillover and the total effect. However, if the compliance rate is 50\%, both the spillover and total effect have wide bounds, although neither include zero. Finally, if the compliance rate is 25\%, the bound on the total effect is not informative, i.e. it ranges from $0$ to $1$, but the bound on the spillover effect becomes narrower.

As a general rule of thumb for binary $Y$, the numerical example suggests that for CRTs with one-sided noncompliance where assumption (A6) is plausible, informative analysis is possible for spillover and total treatment effects. Specifically, with $\tau >1 $ and a high compliance rate, we can obtain tight bounds on both the total and spillover effects. If $\tau  > 1$ and the compliance rate is low, then we can obtain tight bounds on the spillover effect, but not the total effect. If $\tau < 1$ and the compliance rate is high, then we can also obtain tight bounds on the total effect, but not the spillover effect. If $\tau <1 $ and the compliance rate is low, then we can only obtain tight upper bounds on both the total and spillover effects.

\section{Estimation and Inference of Bounds} \label{sec:est}
\subsection{Estimation}
We now outline how to estimate the bounds in Section~\ref{sec:bound} when the outcome is binary as it is in the PSDP. Generally, estimation of the three key values that make up the bounds, $\tau$, $N^{\rm CO}$, and $N^{\rm NT}$, directly follow from literature on IV estimation. To estimate $\tau$, we can use the ratio of the estimated ITT effect with the estimated compliance effect, i.e. 
\begin{align*}
\widehat{\tau}_{Y} &= \frac{1}{N} \left(\sum_{j=1}^{J} \frac{J}{m} Z_{j} \sum_{i=1}^{n_j} Y_{ji} -   \sum_{j=1}^{J} \frac{J}{J-m} (1 - Z_{j}) \sum_{i=1}^{n_j} Y_{ji}  \right), \quad{} \widehat{\tau}_D = \frac{1}{N} \sum_{j=1}^{J} \frac{J}{m} Z_{j} \sum_{i=1}^{n_j} D_{ji}  \\
\widehat{\tau} &= \frac{\widehat{\tau}_{Y}}{\widehat{\tau}_{D}}
\end{align*}
As mentioned earlier, $\widehat{\tau}_{D}$ is also an estimator of the compliance rate or the proportion of compliers and it ranges between $0$ and $1$. \citet{kangkeele2018} prove that this version of the ratio estimator $\hat{\tau}$ is superior in terms of its finite sample properties compared to other IV methods for CRTs; see \citet{kangkeele2018} for details. To estimate the subgroup sizes, $N^{\rm CO}$ and $N^{\rm NT}$, we use the following estimators 
\begin{align*}
\widehat{N}^{\rm CO} &= \sum_{j=1}^{J} \frac{J}{m} Z_j \sum_{i=1}^{n_j} D_{ji}, \quad{} \widehat{N}^{NT} = N - \widehat{N}^{\rm CO}
\end{align*}
Under assumptions (A1), these estimators are unbiased for $N^{\rm CO}$ and $N^{\rm NT}$.

Let $L_{\rm TE}^{\rm CO} =\max\left(0,\tau - \frac{N^{\rm NT}}{N^{\rm CO}} \right), U_{\rm TE}^{\rm CO} = \min(1,\tau)$, $L_{\rm PE}^{\rm NT} =  \max\left(0,\frac{N^{\rm CO}}{N^{\rm NT}} (\tau-1)\right)$, and $U_{\rm PE}^{\rm NT} =  \min\left(1, \tau \frac{N^{\rm CO}}{N^{\rm NT}} \right)$ be the lower and upper bounds of the total and spillover effects in equations \eqref{eq:bound2} and \eqref{eq:bound2.1}. Let $\widehat{L}_{\rm TE}^{\rm CO} =\max\left(0,\widehat{\tau} - \frac{\widehat{N}^{\rm NT}}{\widehat{N}^{\rm CO}} \right), \widehat{U}_{\rm TE}^{\rm CO} = \min(1,\widehat{\tau})$, $\widehat{L}_{\rm PE}^{\rm NT} =  \max\left(0,\frac{\widehat{N}^{\rm CO}}{\widehat{N}^{\rm NT}} (\widehat{\tau}-1)\right)$, and $\widehat{U}_{\rm PE}^{\rm NT} =  \min\left(1, \widehat{\tau} \frac{\widehat{N}^{\rm CO}}{\widehat{N}^{\rm NT}} \right)$ be the plug-in estimates for these bounds. The following theorem shows that these plug-in estimators are consistent estimators under the asymptotic regime where the cluster size remains fixed while the number of clusters go to infinity.
\begin{theorem} \label{thm:bound_est} Suppose assumptions (A1)-(A4.1) and (A5)-(A6) hold and we have binary outcomes. 
Consider the asymptotic regime where the number of clusters go to infinity while the cluster size remains bounded, i.e. $J,J-m \to \infty$ where (i) $m/J \to p \in (0,1)$, (ii) $n_j \leq B$ for some constant $B$, (iii) $\tau_D$ and $\tau_Y$ are fixed for all $J$. Then, 
\begin{align*}
| \widehat{L}_{\rm TE}^{\rm CO} - L_{\rm TE}^{\rm CO} | &\to 0, \quad{} | \widehat{U}_{\rm TE}^{\rm CO} - U_{\rm TE}^{\rm CO} | \to 0, \quad{} |\widehat{L}_{\rm PE}^{\rm NT} - L_{\rm PE}^{\rm NT}| \to 0, \quad{} | \widehat{U}_{\rm PE}^{\rm NT} - U_{\rm PE}^{\rm NT} | \to 0 
\end{align*}
in probability.
\end{theorem}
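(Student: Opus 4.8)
\medskip
\noindent\emph{Proof sketch.}
The plan is to reduce the statement to two elementary convergence facts about difference-in-means estimators under sampling clusters without replacement, and then to invoke the continuous mapping theorem.

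First I would show that $\widehat{\tau}_Y \to \tau_Y$ and $\widehat{\tau}_D \to \tau_D$ in probability. Both estimators are unbiased under (A1): writing $a_j = \sum_{i=1}^{n_j} Y_{ji}^{(1, D_{ji}^{(1)}, \mathbf{D}_{j-i}^{(1)})}$ and $b_j = \sum_{i=1}^{n_j} Y_{ji}^{(0, D_{ji}^{(0)}, \mathbf{D}_{j-i}^{(0)})}$, the observed cluster sum $\sum_{i} Y_{ji}$ equals $a_j$ on $\{Z_j=1\}$ and $b_j$ on $\{Z_j=0\}$, and $E[(J/m)Z_j] = E[(J/(J-m))(1-Z_j)] = 1$, so $E[\widehat{\tau}_Y] = \frac{1}{N}\sum_{j} (a_j - b_j) = \tau_Y$; the same computation with $D_{ji}^{(z_j)}$ in place of $Y$ gives $E[\widehat{\tau}_D] = \tau_D$. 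Since the outcomes are binary and $n_j \le B$, the cluster-level summands $a_j, b_j$ lie in $[0,B]$, so the finite-population (Neyman) variance of $\widehat{\tau}_Y = \frac{J}{N}\big(\frac{1}{m}\sum_{Z_j=1} a_j - \frac{1}{J-m}\sum_{Z_j=0} b_j\big)$ is at most $C B^2 \big(\frac{1}{m}+\frac{1}{J-m}\big)\big(\frac{J}{N}\big)^2$ for a universal constant $C$. As $N \ge J$ we have $(J/N)^2 \le 1$, and by hypothesis $m, J-m \to \infty$, so this variance tends to $0$; Chebyshev's inequality then gives $\widehat{\tau}_Y \to \tau_Y$ in probability, and identically $\widehat{\tau}_D \to \tau_D$ in probability.

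Next I would record the simplifications specific to one-sided noncompliance (A4.1). Because $D_{ji}^{(0)}=0$, on $\{Z_j=1\}$ the observed $D_{ji}$ equals $D_{ji}^{(1)} = I(ji \text{ is CO})$, so $N^{\rm CO}/N = \tau_D$ and $N^{\rm NT}/N = 1-\tau_D$, while $\widehat{N}^{\rm CO}/N = \widehat{\tau}_D$ directly from the definitions. By (A2), $\tau_D\neq 0$, and since the never-taker estimands are only defined when $N^{\rm NT}>0$, we have $\tau_D \in (0,1)$. Crucially, by hypothesis (iii) the quantities $\tau_Y$ and $\tau_D$ --- hence $\tau = \tau_Y/\tau_D$, the ratios $N^{\rm CO}/N$ and $N^{\rm NT}/N$, and therefore the four population bounds $L_{\rm TE}^{\rm CO}$, $U_{\rm TE}^{\rm CO}$, $L_{\rm PE}^{\rm NT}$, $U_{\rm PE}^{\rm NT}$ --- do not depend on $J$.

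Finally I would assemble the conclusion from the continuous mapping theorem. From the first step and Slutsky's theorem, $\widehat{\tau} = \widehat{\tau}_Y/\widehat{\tau}_D \to \tau$ in probability, since $(u,v)\mapsto u/v$ is continuous at $(\tau_Y,\tau_D)$ with $\tau_D\neq 0$; likewise $\widehat{N}^{\rm NT}/\widehat{N}^{\rm CO} = (1-\widehat{\tau}_D)/\widehat{\tau}_D \to N^{\rm NT}/N^{\rm CO}$ and $\widehat{N}^{\rm CO}/\widehat{N}^{\rm NT} \to N^{\rm CO}/N^{\rm NT}$ in probability, both denominators concentrating around the interior point $\tau_D\in(0,1)$ so that the plug-in bounds are well defined on an event of probability tending to $1$. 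Each of the four plug-in bounds is obtained from these convergent sequences through the maps $t\mapsto\max(0,t)$, $t\mapsto\min(1,t)$, and affine/product combinations, all of which are Lipschitz continuous; a last application of the continuous mapping theorem yields $|\widehat{L}_{\rm TE}^{\rm CO} - L_{\rm TE}^{\rm CO}| \to 0$ in probability, and the same for the other three bounds. The main obstacle is the first step: obtaining the $O(1/J)$ bound on the variance of the difference-in-means estimator under sampling $m$ clusters without replacement from $J$, with care for how the $(J/N)^2$ prefactor interacts with the $1/m$ and $1/(J-m)$ factors. Once the summands are controlled through the binary outcomes and the cluster-size cap $n_j\le B$, this is routine finite-population sampling theory, and everything afterward is bookkeeping with the continuous mapping theorem --- the only real care being to use hypotheses (ii)--(iii) to see that the targets are fixed and that the denominators $\widehat{\tau}_D$ and $1-\widehat{\tau}_D$ stay bounded away from $0$.
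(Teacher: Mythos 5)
Your proposal is correct and follows essentially the same route as the paper's proof: unbiasedness of $\widehat{\tau}_Y$ and $\widehat{\tau}_D$ under (A1), a finite-population variance bound of order $B^2(1/m + 1/(J-m))$ that vanishes as $J, J-m \to \infty$, Chebyshev and Slutsky for $\widehat{\tau}$ and the subgroup-proportion ratio, and the continuous mapping theorem applied to $\max(0,\cdot)$ and $\min(1,\cdot)$. Your observation that under (A4.1) one has $\widehat{N}^{\rm CO}/N = \widehat{\tau}_D$ exactly is a small tidy shortcut that lets you skip the paper's separate (but identical in substance) variance computation for the subgroup-size estimator.
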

Conditions (i) and (ii) in Theorem \ref{thm:bound_est} are standard asymptotic regimes for CRTs where the number of clusters go to infinity while the cluster size remains fixed; \citet{kangkeele2018} shows that in the opposite asymptotic regime where the number of clusters remain fixed, but the cluster size goes to infinity, typical CRT estimators exhibit poor properties. Condition (iii) follows Chapter 4.4 of \citet{lehmann_elements_2004} where in finite sample settings, the asymptotic embedding sequence has the same mean for every $N$; this type of asymptotics has been used in noncompliance settings \citep{baiocchi_building_2010}.  Condition (iii) can be generalized to a condition where $\tau_Y$ and $\tau_D$ converge as sample size increases; in this case, $\tau_D$ must be bounded strictly away from $0$ and $1$. 
We will use Theorem \ref{thm:bound_est} as a basis for constructing bounds in our empirical example in Section \ref{sec:app}.

\subsection{Simultaneous Confidence Interval for Bounds}
While Theorem \ref{thm:bound_est} provides a consistent estimator for the bounds, it does not characterize uncertainty. Indeed, the presence of min or max operators in the bounds makes it difficult to derive a closed form expression for the asymptotic distribution of bounds and bootstrap techniques have been proven to be generally invalid in this context\citep{romano1989bootstrap, andrews2000inconsistency,romano2008inference,romano2010inference, andrews2009validity,hirano2012impossibility}; see \citet{tamer2010partial} for a general overview. One solution is to assume an infinite population model with independent and identically distributed (IID) observables. Under an IID infinite population model, \citet{cheng_bounds_2006} used the percentile bootstrap method of \citet{horowitz_nonparametric_2000} to construct confidence intervals for bounds in an IV setting. However, in our application, the assumption of an IID infinite population model is unrealistic given the clustered treatment assignment and interference.

One alternative would be to use the methods in \citet{chernozhukov2013intersection}. They developed  methods for asymptotically bias-corrected estimators and confidence intervals for intersection bounds under more general sampling mechanisms. The Chernozhukov-Lee-Rosen (CLR) approach uses precision-corrected estimates of the terms in the bounding functions before applying the min and max operators. However, using the CLR approach requires a non-trivial coupling argument between our plug-in estimators for bounds and a suitably simple and uniformly concentrating Gaussian process to conduct inference, and is difficult to verify.

Instead, we propose a finite-sample randomization inference method to test the joint hypothesis for the bounds, denoted as $y$ and the spillover effect, denoted as $x$, and to invert this test to obtain simultaneous confidence intervals for both sets of bounds. Formally, consider testing the joint hypothesis
\begin{equation} \label{eq:hyp_bound}
H_{0}: x = x_0, y =y_0
\end{equation}
From Corollary \ref{cor:1}, the null hypothesis $H_0$ in \eqref{eq:hyp_bound} implies the following hypothesis about $\tau$
\begin{equation} \label{eq:hyp_tau}
H_{0}': \tau = \tau_{x_0, y_0}, \quad{} \tau_{x_0,y_0} = y_0 + \frac{p^{\rm NT}}{p^{\rm CO}} x_0
\end{equation}
where $p^{\rm CO} = N^{\rm CO} / N = p^{\rm CO} \in (0,1)$ and $p^{\rm NT} = N^{\rm NT} / N = p^{\rm NT} \in (0,1)$ are the population proportion of complier and never-takers, respectively. If $p^{\rm NT}$ and $p^{\rm CO}$ are known, \citet{kangkeele2018} provided a method to test $H_0'$ by using the difference in adjusted outcomes. Specifically, for each cluster $j$, let $Y_{j} = \sum_{i=1}^{n_j} Y_{ji}$ and $D_j = \sum_{i=1}^{j} D_{ji}$ be the sums of $Y$ and $D$ respectively. Also, let $A_j(\tau_{x_0,y_0}) = Y_{j} - \tau_{x_0,y_0} D_{j}$ be the  adjusted outcome of $Y_{j}$ by $D_{j}$ and let $A_T = \sum_{j=1}^{J} Z_j A_j(\tau_{x_0,y_0}) / m$, and $A_{C}(\tau_{x_0,y_0})= \sum_{j=1}^{J} (1 - Z_j) A_j(\tau_{x_0,y_0}) / (J-m)$ be the means of these adjusted outcomes for the treated and control clusters, respectively. Then, \citet{kangkeele2018} proposed the following test statistic for $H_0'$ 
\begin{equation} \label{eq:test_stat_t}
T(\tau_{x_0,y_0}) = \frac{ A_{T}(\tau_{x_0,y_0}) -A_{C}(\tau_{x_0,y_0})}{\sqrt{ {\rm Var} \left[A_{T}(\tau_{x_0,y_0}) -A_{C}(\tau_{x_0,y_0}) \mid \mathcal{F}, \mathcal{Z} \right]}}
\end{equation}
for some suitable estimator of the variance. For instance, a popular estimator for the variance is
\begin{align*}
\widehat{{\rm Var}}  \left[A_{T}(\tau_{x_0,y_0}) -A_{C}(\tau_{x_0,y_0}) \mid \mathcal{F}, \mathcal{Z} \right] =& \frac{1}{m(m-1)} \sum_{j=1}^{J} Z_j \left\{ A_j(\tau_{x_0,y_0}) -  A_{T}(\tau_{x_0,y_0})\right\}^2 \\
&\quad{}+ \frac{1}{(J-m)(J-m-1)} \sum_{j=1}^{J} (1 - Z_j) \left\{ A_j(\tau_{x_0,y_0}) -  A_{C}(\tau_{x_0,y_0})\right\}^2
\end{align*}
If $p^{\rm NT}$ and $p^{\rm CO}$ are unknown, we can plug-in consistent estimators of them, say $\widehat{p}^{\rm NT} = \widehat{N}^{\rm NT}/ N$ and $\widehat{p}^{\rm CO} = \widehat{N}^{\rm CO} / N$, respectively, from Theorem \ref{thm:bound_est} and use $\widehat{\tau}_{x_0,y_0} = y_0 + \widehat{p}^{\rm NT}/ \widehat{p}^{\rm CO} x_0$ in lieu of $\tau_{x_0,y_0}$. The following Theorem shows that with a plug-in estimate, the test statistic $T(\widehat{\tau}_{x_0,y_0})$ is asymptotically pivotal under $H_0$.
\begin{theorem} \label{thm:bound_CI} Consider the assumptions in Theorem \ref{thm:bound_est} and suppose $H_0$ holds. As $J$ grows, consider a sequence of $\mathcal{F}$ where $\tau_Y$ and $\tau_D$ are constants so that $\tau_Y/ \tau_D = y_0 + p^{\rm NT}/p^{\rm CO} x_0$ is a constant, the test statistic $T(\widehat{\tau}_{x_0,y_0})$ is asymptotically pivotal and Normally distributed, i.e.
\[
T(\widehat{\tau}_{x_0,y_0}) \to N(0,1)
\]
in distribution.
\end{theorem}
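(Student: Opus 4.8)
The plan is to reduce the composite hypothesis $H_0$ to the point hypothesis $H_0'$ about the scalar $\tau$, import the finite-population central limit theorem of \citet{kangkeele2018} for the ``oracle'' statistic $T(\tau_{x_0,y_0})$, and then show that substituting the plug-in $\widehat\tau_{x_0,y_0}$ for $\tau_{x_0,y_0}$ leaves the limiting law unchanged. First I would invoke Corollary \ref{cor:1}: under one-sided noncompliance $\tau = y + (p^{\rm NT}/p^{\rm CO})x$, so $H_0$ implies $H_0': \tau = \tau_{x_0,y_0}$, and along the embedding of condition (iii) of Theorem \ref{thm:bound_est} this is equivalent to $\tau_Y/\tau_D = \tau_{x_0,y_0}$; consequently the cluster-level adjusted outcomes $A_j(\tau_{x_0,y_0}) = Y_j - \tau_{x_0,y_0}D_j$ have equal treated- and control-arm population means, $\frac1J\sum_j (A_j^{(1)}(\tau_{x_0,y_0}) - A_j^{(0)}(\tau_{x_0,y_0})) = (N/J)(\tau_Y - \tau_{x_0,y_0}\tau_D) = 0$. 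Since $Y$ is binary and $n_j \le B$, each $A_j(\tau_{x_0,y_0})$ is bounded, so under the cluster-randomized design (A1) with $m/J \to p\in(0,1)$ a Lyapunov/H\'ajek central limit theorem for the difference of two sample means over a simple random partition of the $J$ clusters gives asymptotic normality of $A_T(\tau_{x_0,y_0}) - A_C(\tau_{x_0,y_0})$ centered at $0$, with the variance estimator in \eqref{eq:test_stat_t} ratio-consistent for its variance; this is exactly $T(\tau_{x_0,y_0}) \to N(0,1)$ as established in \citet{kangkeele2018}, which I would cite.

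Second I would control the plug-in. The estimator $\widehat{p}^{\rm CO} = \widehat N^{\rm CO}/N$ is unbiased under (A1) and, under conditions (i)--(iii), consistent for $p^{\rm CO}\in(0,1)$ at rate $O_p(J^{-1/2})$, hence $\widehat\tau_{x_0,y_0} - \tau_{x_0,y_0} = x_0(\widehat{p}^{\rm NT}/\widehat{p}^{\rm CO} - p^{\rm NT}/p^{\rm CO}) = O_p(J^{-1/2})$. Because the numerator of $T$ is affine in $\tau$ and the variance estimator is quadratic and continuous in $\tau$, the variance estimator evaluated at $\widehat\tau_{x_0,y_0}$ converges in probability to the same limit as at $\tau_{x_0,y_0}$; the subtlety is that the plug-in correction to the numerator is itself of order $J^{-1/2}$, the same order as the numerator, so it contributes to the limit and cannot be dismissed as $o_p(1)$. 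Here I would exploit the explicit form $\widehat{p}^{\rm CO} = \bar D_T/(N/J)$ (with $\bar D_T$ the treated-arm mean of $D_j$) together with $D_j^{(0)} = 0$ to obtain the algebraic identity $A_T(\widehat\tau_{x_0,y_0}) - A_C(\widehat\tau_{x_0,y_0}) = (\bar W_T - \bar W_C) - x_0 (N/J)$, where $W_j = Y_j + (x_0 - y_0)D_j$ is a bounded cluster summary, and then verify, using $\tau_D = p^{\rm CO}$ under one-sided noncompliance and $p^{\rm NT} + p^{\rm CO} = 1$, that under $H_0$ the treated- and control-arm population means of $W_j$ differ by exactly $x_0(N/J)$; thus the plug-in numerator is an ordinary two-sample difference in means centered at $0$. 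Applying the same finite-population CLT to $\bar W_T - \bar W_C$, combining with the convergence of the variance estimator, and using Slutsky then delivers $T(\widehat\tau_{x_0,y_0}) \to N(0,1)$.

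I expect the main obstacle to be this last reconciliation: showing that the variance estimator in \eqref{eq:test_stat_t}, evaluated at $\widehat\tau_{x_0,y_0}$, has the correct probability limit for the variance of the plug-in numerator $\bar W_T - \bar W_C - x_0(N/J)$, even though the plug-in injects additional sampling variability through $\bar D_T$ and the estimator is built from $A_j(\widehat\tau_{x_0,y_0})$ rather than directly from $W_j$. This calls for a careful finite-population variance decomposition separating the treated-arm, control-arm, and cluster-level treatment-effect components of $W_j$, and crucially uses that control clusters contribute no $\tau$-dependence because $D_j^{(0)} = 0$; it may require the specific variance estimator proposed in \citet{kangkeele2018} rather than the generic one displayed after \eqref{eq:test_stat_t} for the ratio of variances to converge to one. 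The remaining ingredients --- verifying the Lyapunov condition for the bounded, non-identically-distributed cluster summands and handling the changing $\mathcal{F}$ of condition (iii) --- are routine once the $n_j \le B$ bound is in hand.
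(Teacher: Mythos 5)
Your first step is the paper's: reduce $H_0$ to $H_0'$ and import the finite-population CLT of \citet{kangkeele2018} for the oracle statistic $T(\tau_{x_0,y_0})$. Where you genuinely diverge is in handling the plug-in. The paper writes $T(\widehat{\tau}_{x_0,y_0})$ as a variance-ratio factor times $T(\tau_{x_0,y_0})$ plus a remainder proportional to $(\tau_{x_0,y_0}-\widehat{\tau}_{x_0,y_0})\widehat{\tau}_D$, argues that this product tends to zero in probability, and uses dominated convergence to send the ratio of the two design variances to one before applying Slutsky. You instead prove the exact finite-sample identity $A_T(\widehat{\tau}_{x_0,y_0})-A_C(\widehat{\tau}_{x_0,y_0}) = \bar{W}_T-\bar{W}_C - x_0 N/J$ with $W_j = Y_j+(x_0-y_0)D_j$, which checks out: $\widehat{\tau}_{x_0,y_0}=y_0-x_0+x_0/\widehat{p}^{\rm CO}$ and $\widehat{p}^{\rm CO}=(J/N)\bar{D}_T$, so the data-dependent part of the adjustment collapses to the constant $x_0 N/J$ in the treated arm and vanishes in the control arm because $D_j^{(0)}=0$; the centering under $H_0$ then follows from $\tau_D=p^{\rm CO}$. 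This buys you an exact rather than asymptotic control of the plug-in error, and it directly confronts a point the paper's argument passes over lightly: the paper's remainder has an $O_p(J^{-1/2})$ numerator over a standard deviation of the same order, so noting that the numerator vanishes does not by itself dispose of the term --- exactly the subtlety you identify. What you leave open (consistency of the variance estimator built from $A_j(\widehat{\tau}_{x_0,y_0})$ as a normalizer for $\bar{W}_T-\bar{W}_C$, and whether the generic estimator displayed after \eqref{eq:test_stat_t} suffices when $x_0\neq 0$) is the same issue the paper resolves with its variance-ratio limit, so your route is a legitimate and arguably more transparent alternative rather than a gap; completing it requires the finite-population variance comparison you describe, in place of the paper's dominated-convergence step.
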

Theorem \ref{thm:bound_est} provides a basis to obtain $1-\alpha$ simultaneous confidence intervals for $x$ and $y$ which represent the total and spillover effects. Specifically, for any $\alpha \in (0,1)$, the $1-\alpha$ confidence set $\mathcal{C}_{1-\alpha}$ of the two effects is the set of values of $x$ and $y$ where
\[
\mathcal{C}_{1-\alpha} = \{ (x_0, y_0) \mid 0\leq x_0\leq 1, 0 \leq y_0 \leq 1, |T(\hat{\tau}_{x_0,y_0})| \leq z_{1-\alpha/2} \}
\]
Here, $z_{1-\alpha.2}$ is the $1-\alpha/2$ quantile of the standard Normal. By the duality of testing and confidence interval, this confidence set covers a \emph{fixed} true value of $x$ and $y$ with at least $1-\alpha$ probability; see \citet{imbens_confidence_2004} for discussion on interpreting confidence intervals for bounds. Additionally, while a grid search over $x_0$ to $y_0$ to find $\mathcal{C}_{1-\alpha}$ is feasible since the outcome is binary, \citet{kangkeele2018} shows that the confidence interval for $\tau$ in $H_0'$ can be efficiently solved using a quadratic equation. Consequently, since $\tau$ is a linear w.r.t. $x_0$ and $y_0$, the joint confidence interval for $x_0$ and $y_0$ is also a solution to a quadratic equation and $\mathcal{C}_{1-\alpha}$ can be efficiently solved.  This computational advantage is in contrast to many inferential methods under the partial identification literature \citep{romano2008inference,romano2010inference,chernozhukov2013intersection} where expensive computation is needed. 

A downside of the proposed approach is that the interval may be conservative. In particular, if the true value of $x$ and $y$ satisfy the equation $y + p^{\rm NT}/p^{\rm CO} x = y_0 + p^{\rm NT}/p^{\rm CO} x_0$, then the test $T(\tau_{x_0,y_0})$ will have no power to reject $H_0'$, leading to potentially large confidence regions. Nevertheless, the confidence set $\mathcal{C}_{1-\alpha}$ will still have Type I error control. 

\section{Application: The Effect of Deworming in the Presence of Noncompliance and Spillovers}
\label{sec:app}
This section revisits the statistical analysis of the PSDP intervention under the new paradigm where we allow for interference and noncompliance. The initial analysis of the PSDP intervention focused on estimating ITT effects, specifically the direct effect of the deworming intervention and the treatment spillover effects \citep{miguel2004worms}; the original analysis did not include any analyses that focused on complier effects. The original data did, however, include detailed measures of whether students complied with cluster-level treatment assignment. In our analysis, we use an updated version of the data that corrected a series of data errors in the original data \citep{aiken2015re,hicks2015commentary}. We focus on the primary outcome from the original study, which was a binary indicator for the presence of a helminth infection with $1$ denoting infection and $0$ otherwise. We note that in this example, taking the treatment (new deworming intervention) leads to a decrease in outcome (i.e. infection) and hence, our effect estimates are negative. Also, we remind readers that the PSDP intervention had one-sided noncompliance where individuals in control clusters could not seek out the new medication.

First, we carry out the standard IV analysis, which is what an investigator may naively try with CRT data with noncompliance, falsely assuming that effects from treatment spillovers are mitigated due to clustered treatment assignment. Using the methods in \citet{kangkeele2018} to estimate $\tau$, the estimate of $\tau$ is $-0.79$ with a 95\% confidence interval of $-1$ and $-0.51$. While $\tau$ is easily estimable using existing IV methods, this estimate of $\tau$ is no longer the effect for compliers alone, but also includes effects among never-takers. 

Next, we focus on estimating the bounds for both the total effect among compliers, $\overline{\rm TE}^{\rm CO}(1;0)$, and the spillover effect among never-takers, $\overline{\rm PE}^{\rm NT}(0)$. We can plug in estimates of $\widehat{N}^{\rm NT}, \widehat{N}^{\rm CO}$, and $\hat{\tau}$ into equation \eqref{eq:bound} to estimate the bounds. In the PDSP intervention, the instrument is fairly strong: average student compliance in schools assigned to treatment was 60\%. As such, the number of compliers is higher compared to the number of never-takers, which indicates we should be able to obtain relatively informative bounds on $\overline{\rm TE}^{\rm CO}(1;0)$, but not on $\overline{\rm PE}^{\rm NT}(0;0)$. Indeed, when we plug in estimates for our bounds, we obtain estimates of $-0.79$ and $-0.12$ for the total effect among compliers and $0$ and $1$ for the spillover effect among never-takers. Note that by construction, the lower bound is equivalent to the estimate for $\tau$. Next, we computed the 95\% confidence interval for the total effect and the spillover effect using the proposed method. The confidence intervals are $-1$ and $0$; in words, once the uncertainty of the bound estimates are taken into considering, the bounds for the total and spillover effects span the entire parameter space. We also estimated the confidence intervals using the bootstrap and the methods outlined in \citet{chernozhukov2013intersection}. Both methods returned a confidence region of -1 and 0. 

Figure \ref{fig:ci_band} graphically represents the results of the bounds from the study. Each point inside the shaded area represents plausible values of the total and spillover effect after taking into account the uncertainty in estimating the bounds. From the plot, we see that certain combinations of total effect and spillover effect values are improbable. For example, based on this study with the given sample size, it's unlikely that the deworming treatment led to 75\% reduction in helminth infections among both compliers and never-takers, as measured by the total effect among compliers and the spillover effect among never-takers. Similarly, it's unlikely deworming treatment lead to only 25\% reduction in helminth infections among both compliers and never-takers.

\begin{figure}
\centering
\includegraphics[scale=.8]{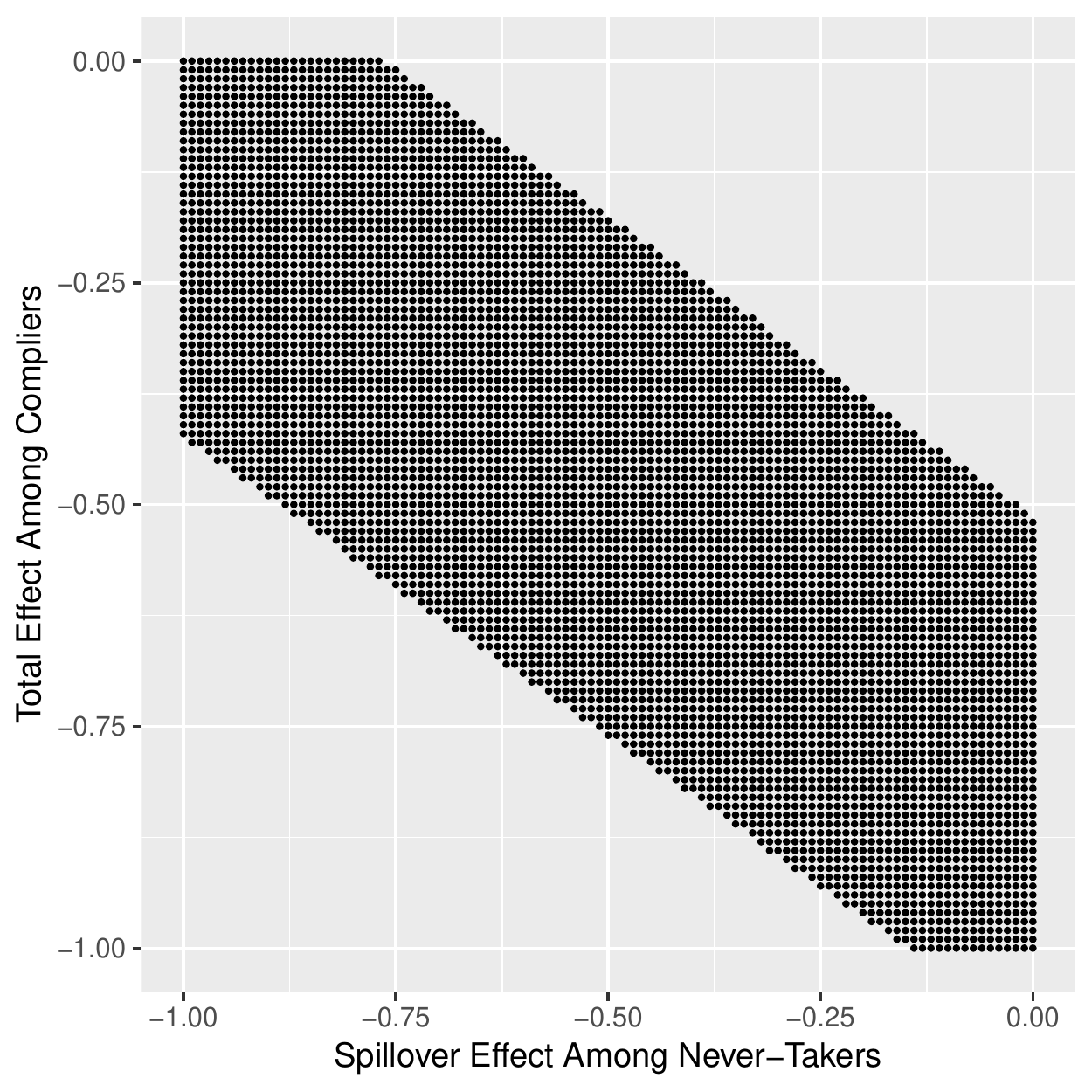}
\caption{95\% Simultaneous Confidence Intervals for the Total Effect Among Compliers and the Spillover Effect Among Never-Takers of the New Deworming Treatment. Here, taking the treatment (i.e. new deworming medication) will lead to a smaller outcome (i.e. the presence of helminth infections) and hence, our effect estimates are negative in sign.  Each point inside the shaded area represents plausible values of the total and spillover effect after taking into account the uncertainty in estimating the bounds.}
\label{fig:ci_band}
\end{figure}

\section{Conclusion}

In this paper, we studied CRTs with both noncompliance and treatment spillovers. In many public health interventions, subjects may refuse treatments but are partially exposed when other subjects take the treatment. In the PSDP, some students did not take the medications that comprised the treatment. However, infection levels may be lower for these unexposed students as peers who took the treatment lowered their likelihood of environmental exposure to helminths.

We showed that standard causal estimands of interest cannot be estimated under the usual assumptions used with noncompliance and interference. The standard IV analysis leads to a mixture of causal effects instead of the usual complier average causal effect. We extended the result to show that unbiased estimation of key components of the causal estimands in Section \ref{sec:target} is impossible. Finally, we showed that investigators must rely on partial identification methods to place bounds on these quantities. While partial identification results often produce wide bounds that are uninformative, we showed that in the PSDP data and in Figure \ref{fig:ci_band}, the bounds can be informative even once sampling uncertainty is accounted for.

\appendix

\section{Proofs of Key Theorems}

\begin{proof}[Proof of Theorem \ref{thm:gen}] We start by decomposing the term $Y_{ji}^{(1,D_{ji}^{(1)},\mathbf{D}_{j-i}^{(1)})} - Y_{ji}^{(0,D_{ji}^{(0)}, \mathbf{D}_{j-i}^{(0)})}$ in $\tau_Y$. Under network exclusion restriction (A3) and monotonicity (A4), we have
{\small
\begin{align*}
&Y_{ji}^{(1,D_{ji}^{(1)},\mathbf{D}_{j-i}^{(1)})} - Y_{ji}^{(0,D_{ji}^{(0)}, \mathbf{D}_{j-i}^{(0)})} \\
=&Y_{ji}^{(D_{ji}^{(1)},\mathbf{D}_{j-i}^{(1)})} - Y_{ji}^{(D_{ji}^{(0)}, \mathbf{D}_{j-i}^{(0)})} \\
=& \left[Y_{ji}^{(1,\mathbf{D}_{j-i}^{(1)})} - Y_{ji}^{(1, \mathbf{D}_{j-i}^{(0)})}\right] I(\text{\rm $ji$ is AT}) + \left[Y_{ji}^{(0,\mathbf{D}_{j-i}^{(1)})} - Y_{ji}^{(0, \mathbf{D}_{j-i}^{(0)})}\right] I(\text{\rm $ji$ is NT}) + \left[Y_{ji}^{(1,\mathbf{D}_{j-i}^{(1)})} - Y_{ji}^{(0, \mathbf{D}_{j-i}^{(0)})}\right] I(\text{\rm $ji$ is CO}) \\
=& {\rm PE}_{ji}(1,\mathbf{D}_{j-i}^{(1)}; 1, \mathbf{D}_{j-i}^{(0)}) I(\text{\rm $ji$ is AT}) + {\rm PE}_{ji}(0,\mathbf{D}_{j-i}^{(1)}; 0, \mathbf{D}_{j-i}^{(0)}) I(\text{\rm $ji$ is NT}) + {\rm TE}_{ji}(1,\mathbf{D}_{j-i}^{(1)};0,\mathbf{D}_{j-i}^{(0)})  I(\text{\rm $ji$ is CO}) 
\end{align*}
}
where 
\begin{align*}
{\rm PE}_{ji}(d_{ji},\mathbf{D}_{j-i}^{(1)}; d_{ji}, \mathbf{D}_{j-i}^{(0)}) &= Y_{ji}^{(d_{ji},\mathbf{D}_{j-i}^{(1)})} - Y_{ji}^{(d_{ji}, \mathbf{D}_{j-i}^{(0)})} \\
{\rm TE}_{ji}(1,\mathbf{D}_{j-i}^{(1)};0,\mathbf{D}_{j-i}^{(0)}) &= Y_{ji}^{(1,\mathbf{D}_{j-i}^{(1)})} - Y_{ji}^{(0, \mathbf{D}_{j-i}^{(0)})}
\end{align*}
Notice that so long as the vectors $\mathbf{D}_{j-i}^{(1)} - \mathbf{D}_{j-i}^{(0)} \neq \mathbf{0}$, we can guarantee that the spillover effects for ATs and NTs do not disappear. This occurs when there is at least one CO in cluster $j$, say individual $i'$, because $D_{ji'}^{(1)} - D_{ji'}^{(0)} = 1$. Then, summing these quantities over all clusters give you the sum of total effects and spillover effects for different subgroups CO, AT, and NT, all without (A5).
\\\\
With (A5), ${\rm PE}_{ji}(d_{ji},\mathbf{D}_{j-i}^{(1)}; d_{ji}, \mathbf{D}_{j-i}^{(0)})$ and ${\rm TE}_{ji}(1,\mathbf{D}_{j-i}^{(1)};0,\mathbf{D}_{j-i}^{(0)})$ simplify to the sum of the vectors $\mathbf{D}_{j-i}^{(z)}$ for different $z \in \{0,1\}$. In particular, the sum of the vector $\mathbf{D}_{j-i}^{(1)}$, i.e. $\sum_{i' \neq i} D_{ji'}^{(1)}$, is the number of compliers and always-takers in cluster $j$, minus 1 if $i$ is either a complier or an always-taker or $0$ otherwise. The sum of the vector $\mathbf{D}_{j-i}^{(0)}$, i.e. $\sum_{i' \neq i} D_{ji'}^{(0)}$, is the number of always-takers in cluster $j$ minus $1$ if $i$ is an always-taker or $0$ otherwise.
\\\\
In both cases, dividing this quantity $\tau_D$, which constitutes the total number of compliers in the population and is non-zero because of (A2), gives the desired result. A similar argument can be used to prove the decomposition under one-sided noncompliance where we remove ATs in the above expression.
\end{proof}

\begin{proof}[Proof of Theorem \ref{thm:imp}]
Let $\mathcal{F} = \{Y_{ji}^{(D_{ji}^{(z)}, \mathbf{D}_{j-i}^{(z)})}, D_{ji}^{(z)} \mid j=1,\ldots,J,i=1,\ldots,n_j, z\in \{0,1\} \}$ be the parameter set satisfying (A2)-(A4). By knowing $\mathcal{F}$, we can fully characterize the distribution of the observed data $Y_{ji}, D_{ji}$ and $Z_j$. An unbiased estimator $S(\mathbf{Y}, \mathbf{D}, \mathbf{Z})$ of an estimand $\theta(\mathcal{F})$, say the sum of individual total effect among compliers, satisfies
\begin{equation} \label{eq:unbias}
\theta(\mathcal{F}) = E[T(\mathbf{Y}, \mathbf{D}, \mathbf{Z}) \mid \mathcal{F}, \mathcal{Z}]
\end{equation}
for all parameter values. We show that for the estimands in Theorem \ref{thm:imp}, such a function $T$ does not exist. 

First, we work with $\overline{\rm TE}_{\rm sum}^{\rm CO}(1, \mathbf{k}_{1}-\mathbf{1}; 0,\mathbf{k}_{0}; \mathcal{F})$, where we add the argument $\mathcal{F}$ for clarity. Without loss of generality, let the first cluster $j=1$ contain at least two individuals. Suppose by contradiction, there is an unbiased estimator $T(\mathbf{Y}, \mathbf{D}, \mathbf{Z})$ for $\overline{\rm TE}_{\rm sum}^{\rm CO}(1, \mathbf{k}_{1}-\mathbf{1}; 0,\mathbf{k}_{0}; \mathcal{F})$ and consider two parameter sets, $\mathcal{F}$ and $\mathcal{F}'$. Parameter set $\mathcal{F}$ has at least one always-taker, say $i=1$, and one complier, say $i=2$, in cluster $j=1$. Parameter $\mathcal{F}'$ flips the always-taker $i=1$ in $\mathcal{F}$ to a complier and the complier $i=2$ in $\mathcal{F}$ to an always-taker. The rest remains identical between the two parameters $\mathcal{F}$ and $\mathcal{F}'$, in particular the number of always-takers and compliers. Then, the difference between the two estimands under two parameters $\mathcal{F}, \mathcal{F}'$ is 
\begin{align*}
&\overline{\rm TE}_{\rm sum}^{\rm CO}(1, \mathbf{k}_{1}-\mathbf{1}; 0,\mathbf{k}_{0}; \mathcal{F}') - \overline{\rm TE}_{\rm sum}^{\rm CO}(1, \mathbf{k}_{1}-\mathbf{1}; 0,\mathbf{k}_{0}; \mathcal{F}) \\
=& \left(Y_{12}^{(1,[n_{j}^{\rm AT} + n_{j}^{\rm CO}-1])} - Y_{12}^{(0,[n_{j}^{\rm AT}])}  \right) - \left(Y_{11}^{(1,[n_{j}^{\rm AT} + n_{j}^{\rm CO}-1])} - Y_{11}^{(0,[n_{j}^{\rm AT}])} \right)
\end{align*}
Also, by the definition of expectations and assumption (A1), for any $\mathcal{F}$, $T(\mathbf{Y}, \mathbf{D}, \mathbf{Z})$ must obey
\begin{align*}
&E[T(\mathbf{Y}, \mathbf{D}, \mathbf{Z}) \mid \mathcal{F}, \mathcal{Z}] \\
=& \frac{1}{{J \choose m}} \sum_{\mathbf{z} \in \mathcal{Z}} T(\mathbf{Y}, \mathbf{D},\mathbf{z}) \\
=& \frac{1}{{J \choose m}} \left[ \sum_{\mathbf{z} \in \mathcal{Z}; z_1 = 1} T(\mathbf{Y}, \mathbf{D},1,\mathbf{z}_{-1}) + \sum_{\mathbf{z} \in \mathcal{Z}; z_1 = 0} T(\mathbf{Y}, \mathbf{D},0,\mathbf{z}_{-1})  \right] \\
=& \frac{1}{{J \choose m}}  \left[ \sum_{\mathbf{z} \in \mathcal{Z}; z_1 = 1} T(Y_{11}^{(D_{11}^{(1)}, \mathbf{D}_{1-1}^{(1)})},\ldots, Y_{1n_1}^{(D_{1n_1}^{(1)}, \mathbf{D}_{1-n_1}^{(1)})},\ldots,Y_{Jn_J}^{(D_{Jn_J}^{(z_J)}, \mathbf{D}_{J-n_J}^{(1)})}, D_{11}^{(1)},\ldots,D_{1n_1}^{(1)},\ldots,D_{Jn_J}^{(z_J)},1,\mathbf{z}_{-1}) \right. \\
&+ \left. \sum_{\mathbf{z} \in \mathcal{Z}; z_1 = 0}  T(Y_{11}^{(D_{11}^{(0)}, \mathbf{D}_{1-1}^{(0)})},\ldots, Y_{1n_1}^{(D_{1n_1}^{(0)}, \mathbf{D}_{1-n_1}^{(0)})},\ldots,Y_{Jn_J}^{(D_{Jn_J}^{(z_J)}, \mathbf{D}_{J-n_J}^{(z_J)})}, D_{11}^{(0)},\ldots,D_{1n_1}^{(0)},\ldots,D_{Jn_J}^{(z_J)},0,\mathbf{z}_{-1}) \right] 
\end{align*}
Taking the difference between $E[T(\mathbf{Y}, \mathbf{D}, \mathbf{Z}) \mid \mathcal{F}', \mathcal{Z}]$ and $E[T(\mathbf{Y}, \mathbf{D}, \mathbf{Z}) \mid \mathcal{F}, \mathcal{Z}]$ leads to 
\begin{align*}
=& \frac{1}{{J \choose m}}  \left[ \sum_{\mathbf{z} \in \mathcal{Z}; z_1 = 1} T(Y_{11}^{(1, [n_{j}^{\rm AT} + n_{j}^{\rm CO}-1])},Y_{12}^{(1, [n_{j}^{\rm AT} + n_{j}^{\rm CO}-1])},\ldots, D_{11}^{(1)} = 1, D_{12}^{(1)} = 1, \ldots,z_{1} = 1,\mathbf{z}_{-1}) \right. \\
&+ \left. \sum_{\mathbf{z} \in \mathcal{Z}; z_1 = 0}  T(Y_{11}^{(1, [n_{j}^{\rm AT}-1])},Y_{12}^{(0, [n_{j}^{\rm AT}])},\ldots, D_{11}^{(0)} = 1, D_{12}^{(0)} = 0, \ldots,z_{1} = 0,\mathbf{z}_{-1}) \right] \\
&\quad{}- \frac{1}{{J \choose m}}  \left[ \sum_{\mathbf{z} \in \mathcal{Z}; z_1 = 1} T(Y_{11}^{(1, [n_{j}^{\rm AT} + n_{j}^{\rm CO}-1])},Y_{12}^{(1, [n_{j}^{\rm AT} + n_{j}^{\rm CO}-1])},\ldots, D_{11}^{(1)} = 1, D_{12}^{(1)} = 1, \ldots,z_{1} = 1,\mathbf{z}_{-1}) \right. \\
&+ \left. \sum_{\mathbf{z} \in \mathcal{Z}; z_1 = 0}  T(Y_{11}^{(0, [n_{j}^{\rm AT}])},Y_{12}^{(1, [n_{j}^{\rm AT} -1])},\ldots, D_{11}^{(0)} = 0, D_{12}^{(0)} = 1, \ldots,z_{1} = 0,\mathbf{z}_{-1}) \right] \\
=& \frac{1}{{J \choose m}}  \left[ \sum_{\mathbf{z} \in \mathcal{Z}; z_1 = 0}  T(Y_{11}^{(1, [n_{j}^{\rm AT} -1])},Y_{12}^{(0, [n_{j}^{\rm AT} ])},\ldots, D_{11}^{(0)} = 1, D_{12}^{(0)} = 0, \ldots,z_{1} = 0,\mathbf{z}_{-1}) \right. \\
&\quad{} \left. -  T(Y_{11}^{(0, [n_{j}^{\rm AT}])},Y_{12}^{(1, [n_{j}^{\rm AT} -1])},\ldots, D_{11}^{(0)} = 0, D_{12}^{(0)} = 1, \ldots,z_{1} = 0,\mathbf{z}_{-1})  \right]
\end{align*}
where we suppress arguments to $T$ that do not change. By our assumption about unbiasedness, the difference in expectation has to equal to the difference in the estimand between $\mathcal{F}'$ and $\mathcal{F}$. However, the estimand's difference has the terms $Y_{12}^{(1,[n_{j}^{\rm AT} + n_{j}^{\rm CO}-1])}$ and $Y_{11}^{(1,[n_{j}^{\rm AT} + n_{j}^{\rm CO}-1])}$ whereas the expectation's difference does not contain these terms, a contradiction, and no unbiased estimator $T$ exists.
\\\\
Second, for $\overline{\rm PE}_{\rm sum}^{\rm AT}(1,\mathbf{k}_{1} - \mathbf{1}; 1,\mathbf{k}_{0} - \mathbf{1};  \mathcal{F})$, consider an unbiased estimator $T(\mathbf{Y}, \mathbf{D}, \mathbf{Z})$ for it and the same two set of parameters $\mathcal{F}$ and $\mathcal{F}'$ above. Then, the difference between the estimands is 
\begin{align*}
&\overline{\rm PE}_{\rm sum}^{\rm AT}(1,\mathbf{k}_{1} - \mathbf{1}; 1,\mathbf{k}_{0} - \mathbf{1};  \mathcal{F}') - \overline{\rm PE}_{\rm sum}^{\rm AT}(1,\mathbf{k}_{1} - \mathbf{1}; 1,\mathbf{k}_{0} - \mathbf{1};  \mathcal{F})\\
=&\left(Y_{11}^{(1,[n_{j}^{\rm AT} + n_{j}^{\rm CO}-1])} - Y_{11}^{(1,[n_{j}^{\rm AT} - 1])}  \right) - \left(Y_{12}^{(1,[n_{j}^{\rm AT} + n_{j}^{\rm CO}-1])} - Y_{12}^{(1,[n_{j}^{\rm AT} -1])} \right)
\end{align*}
The difference in the expectations $E[T(\mathbf{Y}, \mathbf{D}, \mathbf{Z}) \mid \mathcal{F}', \mathcal{Z}]$ and $E[T(\mathbf{Y}, \mathbf{D}, \mathbf{Z}) \mid \mathcal{F}, \mathcal{Z}]$ remains the same as before. But, like the total effect, the terms inside the difference in the estimands, $Y_{12}^{(1,[n_{j}^{\rm AT} + n_{j}^{\rm CO}-1])}$ and $Y_{11}^{(1,[n_{j}^{\rm AT} + n_{j}^{\rm CO}-1])}$, do not appear in the difference of the expectations, leading to a contradiction.
\\\\
Finally, for $\overline{\rm PE}_{\rm sum}^{\rm NT}(0,\mathbf{k}_1; 0, \mathbf{k}_0; \mathcal{F})$, consider an unbiased estimator $T(\mathbf{Y}, \mathbf{D}, \mathbf{Z})$ for it and the same two set of parameters $\mathcal{F}$ and $\mathcal{F}'$ above except we replace the always-takers as never-takers. Then, the difference between the estimands is 
\begin{align*}
&\overline{\rm PE}_{\rm sum}^{\rm NT}(0,\mathbf{k}_1; 0, \mathbf{k}_0; \mathcal{F}') - \overline{\rm PE}_{\rm sum}^{\rm NT}(0,\mathbf{k}_1; 0, \mathbf{k}_0; \mathcal{F}) \\
=&\left(Y_{11}^{(0,[n_{j}^{\rm AT} + n_{j}^{\rm CO}])} - Y_{11}^{(0,[n_{j}^{\rm AT}])}  \right) - \left(Y_{12}^{(0,[n_{j}^{\rm AT} + n_{j}^{\rm CO}])} - Y_{12}^{(0,[n_{j}^{\rm AT}])} \right)
\end{align*}
The difference in the expectations $E[T(\mathbf{Y}, \mathbf{D}, \mathbf{Z}) \mid \mathcal{F}', \mathcal{Z}]$ and $E[T(\mathbf{Y}, \mathbf{D}, \mathbf{Z}) \mid \mathcal{F}, \mathcal{Z}]$ remains the same as before. Like the spillover effect, the terms inside the difference in the estimands, $Y_{12}^{(0,[n_{j}^{\rm AT} + n_{j}^{\rm CO}])}$ and $Y_{11}^{(0,[n_{j}^{\rm AT} + n_{j}^{\rm CO}])}$, do not appear in the difference of the expectations, leading to a contradiction.
\end{proof}

\begin{proof}[Proof of Theorem \ref{thm:bound_est}]
The proof proceeds in two steps. We first show that the estimators $\widehat{\tau}$ and $\frac{N^{\rm CO}}{N^{\rm NT}}$ are consistent. Second, we utilize the continuous mapping theorem to show that the estimators for the bounds are consistent. 

First, under (A1), the numerators and denominators that make up $\widehat{\tau}$, $\widehat{\tau}_Y$ and $\widehat{\tau}_D$ are unbiased for $\tau_Y$ and $\tau_D$, respectively. Let $\widetilde{Y}_{j.}^{(1)} = \sum_{i=1}^{n_j} Y_{ji}^{(1,D_{ji}^{(1)}, \mathbf{D}_{j-i}^{(1)})}$ and $\widetilde{Y}_{j.}^{(0)} = \sum_{i=1}^{n_j} Y_{ji}^{(0,D_{ji}^{(0)}, \mathbf{D}_{j-i}^{(0)})}$. Then, the variances of $\widehat{\tau}_Y$ and $\widehat{\tau}_D$ are
\begin{align*}
Var\left[\widehat{\tau}_Y \mid \mathcal{F}, \mathcal{Z}\right] &= Var\left[\frac{J}{N} \sum_{j=1}^J Z_j \left(\frac{\widetilde{Y}_{j.}^{(1)}}{m} +  \frac{\widetilde{Y}_{j.}^{(0)}}{J-m} \right)  \mid \mathcal{F}, \mathcal{Z} \right] \\
&=  \frac{J^2}{N^2} \cdot \frac{m(J - m)}{J(J-1)} \sum_{j=1}^{J} \left[ \left(\frac{\widetilde{Y}_{j.}^{(1)}}{m} +  \frac{\widetilde{Y}_{j.}^{(0)}}{J-m} \right) - \frac{1}{J} \sum_{j=1}^{J} \left(\frac{\widetilde{Y}_{j.}^{(1)}}{m} +  \frac{\widetilde{Y}_{j.}^{(0)}}{J-m} \right) \right]^2 \\
&\leq \frac{J^2}{N^2} \cdot \frac{m(J - m)}{J(J-1)}  \sum_{j=1}^{J} \left(\frac{\widetilde{Y}_{j.}^{(1)}}{m} +  \frac{\widetilde{Y}_{j.}^{(0)}}{J-m} \right)^2 \\
&\leq J^2 \cdot \frac{m(J - m)}{J(J-1)} \cdot \frac{J^2}{m^2 (J-m)^2} \cdot \frac{ \sum_{j=1}^{J} n_{j}^2 }{\left(\sum_{j=1}^{J} n_j \right)^2} \\
&\leq \frac{J^3}{(J-1)(m)(J-m)} \cdot \frac{JB^2}{J^2} 
\end{align*}
where the second inequality uses the boundedness of the outcome to bound $\widetilde{Y}_{j.}^{(1)}, \widetilde{Y}_{j.}^{(0)} \leq n_j$. As $J, J-m \to \infty$, the upper bound on the variance goes to zero and we have consistency of $\widehat{\tau}_Y$. A similar argument can be used to also prove that $Var\left[\widehat{\tau}_D \mid \mathcal{F}, \mathcal{Z}\right] \to 0$. Combining the two consistent estimators via Slutsky's theorem yields $\widehat{\tau} = \widehat{\tau}_{Y} / \widehat{\tau}_D \to \tau_Y / \tau_D = \tau$.

Next, we show that the ratio $\frac{\widehat{N}^{\rm CO}}{\widehat{N}^{\rm NT}}$ is consistent for the ratio $\frac{p^{\rm CO}}{p^{\rm NT}}$. We see that
\begin{align*}
\frac{\widehat{N}^{\rm CO}}{\widehat{N}^{\rm NT}} &= \frac{\frac{1}{N} \sum_{j=1}^{J} \frac{J}{m} Z_j \sum_{i=1}^{n_j} D_{ji}}{1 - \frac{1}{N} \sum_{j=1}^{J} \frac{J}{m} Z_j \sum_{i=1}^{n_j} D_{ji}}= \frac{\frac{1}{N} \sum_{j=1}^{J} \frac{J}{m} Z_j n_{j}^{\rm CO} }{1 - \frac{1}{N} \sum_{j=1}^{J} \frac{J}{m} Z_j n_{j}^{\rm CO} } 
\end{align*}
Notice that the expectation for the numerator and the denominators are unbiased estimators of the proportions $p^{\rm CO}$ and $p^{\rm NT}$.
\begin{align*}
E\left[\frac{1}{N} \sum_{j=1}^{J} \frac{J}{m} Z_j n_{j}^{\rm CO} \mid \mathcal{F}, \mathcal{Z} \right] &= \frac{N^{\rm CO}}{N} = p^{\rm CO} \\
E\left[1 - \frac{1}{N} \sum_{j=1}^{J} \frac{J}{m} Z_j n_{j}^{\rm CO} \mid \mathcal{F}, \mathcal{Z} \right] &=1 -  \frac{N^{\rm CO}}{N} = p^{\rm NT}
\end{align*} 
Also, the variances of both the numerator and the denominator go to zero:
\begin{align*}
Var\left[\frac{1}{N} \sum_{j=1}^{J} \frac{J}{m} Z_j n_{j}^{\rm CO} \mid \mathcal{F}, \mathcal{Z} \right] &= \frac{J^2}{mN^2} \cdot \frac{J - m}{J(J-1)} \cdot \sum_{j=1}^{J} \left(n_j^{\rm CO} - \frac{1}{J} \sum_{j=1}^{J} n_j^{\rm CO} \right)^2 \\
&\leq \frac{J(J-m)}{m(J-1)} \cdot \frac{\sum_{j=1}^{J} \left(n_j^{\rm CO}\right)^2 }{\left(\sum_{j=1}^{J} n_j \right)^2} \\
&\leq \frac{J(J-m)}{m(J-1)} \cdot \frac{JB^2}{J^2} \to 0
\end{align*}
Consequently, we have
\[
\frac{\widehat{N}^{\rm CO}}{\widehat{N}^{\rm NT}}  \to \frac{p^{\rm CO}}{p^{\rm NT}}
\]
Finally, for any fixed constants $c_0, c_1$, consider the two functions $f(x) = \min(x,c_0)$ for some $c$ and $g(x) = \max(x,c_1)$. These are continuous functions of $x$ and our bounds are of this type, we can use the continuous mapping theorem to arrive at the desired consistency for the bound estimators. 

\end{proof}

\begin{proof}[Proof of Theorem \ref{thm:bound_CI}]
The proof is a direct consequence of Proposition 4 of \citet{kangkeele2018}. Specifically, $H_0$ implies $H_0': \tau = \tau_{x_0,y_0}$ which is the setting in Proposition 4. Also, condition (i) in Proposition 4 of \citet{kangkeele2018} is satisfied because $\mu_Y$ and $\mu_D = p_{\rm CO}$ remain fixed in such a way that $\mu_Y / \mu_D = y_0 + p^{\rm NT}/p^{\rm CO} x_0 = \tau_{x_0,y_0}$ remains fixed; note that $p^{\rm NT} = 1- p^{\rm CO}$. Condition (ii) in Proposition 4 of \citet{kangkeele2018} is satisfied because the outcomes are binary. Then, by Proposition 4 of \citet{kangkeele2018}, we have
\[
T(\tau_{x_0,y_0}) \to N(0,1)
\]
For the plug-in test statistic $T(\widehat{\tau}_{x_0,y_0})$, we have
\begin{align*}
T(\widehat{\tau}_{x_0,y_0}) &= \frac{ \frac{N}{J} \left( \widehat{\tau}_{Y} - \widehat{\tau}_{x_0,y_0} \widehat{\tau}_D\right) }{ \sqrt{ {\rm Var} \left[ \frac{N}{J} \left( \widehat{\tau}_{Y} - \widehat{\tau}_{x_0,y_0} \widehat{\tau}_D\right) \mid \mathcal{F}, \mathcal{Z} \right]}} \\
&=\sqrt{ \frac{ {\rm Var} \left[\widehat{\tau}_Y - {\tau}_{x_0,y_0} \hat{\tau}_D  \mid \mathcal{F}, \mathcal{Z} \right] }{ {\rm Var} \left[\widehat{\tau}_Y - \widehat{\tau}_{x_0,y_0} \hat{\tau}_D  \mid \mathcal{F}, \mathcal{Z} \right]}} \left( \frac{ \widehat{\tau}_Y - \tau_{x_0,y_0}\hat{\tau}_D}{\sqrt{ {\rm Var} \left[\widehat{\tau}_Y - {\tau}_{x_0,y_0} \hat{\tau}_D  \mid \mathcal{F}, \mathcal{Z} \right]}}  + \frac{(\tau_{x_0,y_0} - \widehat{\tau}_{x_0,y_0} ) \widehat{\tau}_{D} }{\sqrt{ {\rm Var} \left[\widehat{\tau}_Y - {\tau}_{x_0,y_0} \hat{\tau}_D  \mid \mathcal{F}, \mathcal{Z} \right]}}\right) \\
&=  \sqrt{ \frac{ {\rm Var} \left[\widehat{\tau}_Y - {\tau}_{x_0,y_0} \hat{\tau}_D  \mid \mathcal{F}, \mathcal{Z} \right] }{ {\rm Var} \left[\widehat{\tau}_Y - \widehat{\tau}_{x_0,y_0} \hat{\tau}_D  \mid \mathcal{F}, \mathcal{Z} \right]}} T(\tau_{x_0,y_0}) \\
&\quad{} +  \sqrt{ \frac{ {\rm Var} \left[\widehat{\tau}_Y - {\tau}_{x_0,y_0} \hat{\tau}_D  \mid \mathcal{F}, \mathcal{Z} \right] }{ {\rm Var} \left[\widehat{\tau}_Y - \widehat{\tau}_{x_0,y_0} \hat{\tau}_D  \mid \mathcal{F}, \mathcal{Z} \right]}} \frac{(\tau_{x_0,y_0} - \widehat{\tau}_{x_0,y_0} ) \widehat{\tau}_{D} }{\sqrt{ {\rm Var} \left[\widehat{\tau}_Y - {\tau}_{x_0,y_0} \hat{\tau}_D  \mid \mathcal{F}, \mathcal{Z} \right]}}
\end{align*}
By the proof in Theorem \ref{thm:bound_est} and under $H_0$, the following convergences in probability are true
\begin{align*}
\widehat{\tau}_{D} = \widehat{p}^{\rm CO} \to \tau_D = p^{\rm CO}, \quad{} \widehat{\tau}_Y \to \tau_Y, \quad{} \widehat{\tau}_{x_0,y_0} \to \tau_{x_0,y_0}
\end{align*}
so that the product converges to zero in probability, i.e. $(\tau_{x_0,y_0} - \widehat{\tau}_{x_0,y_0}) \widehat{\tau}_D \to 0$. We also note that because $Y_{ji}, D_{ji}$ are binary and the cluster-sized are bounded, both $\widehat{\tau}_Y$ and $\widehat{\tau}_D$ are uniformly bounded. Then, by the dominated convergence theorem, under $H_0$, we have
\begin{align*}
E[ (\widehat{\tau}_Y - \tau_{x_0,y_0} \widehat{\tau}_D)^2 \mid \mathcal{F}, \mathcal{Z}] &\to (\tau_Y - \tau_{x_0,y_0} \tau_{D})^2, \quad{} E[ (\widehat{\tau}_Y - \tau_{x_0,y_0} \widehat{\tau}_D) \mid \mathcal{F}, \mathcal{Z}] \to (\tau_Y - \tau_{x_0,y_0} \tau_{D}) \\
E[ (\widehat{\tau}_Y - \widehat{\tau}_{x_0,y_0} \widehat{\tau}_D)^2 \mid \mathcal{F}, \mathcal{Z}] &\to (\tau_Y - \tau_{x_0,y_0} \tau_{D})^2, \quad{} E[ (\widehat{\tau}_Y - \widehat{\tau}_{x_0,y_0} \widehat{\tau}_D) \mid \mathcal{F}, \mathcal{Z}] \to (\tau_Y - \tau_{x_0,y_0} \tau_{D}) 
\end{align*}
Hence, 
\[
 \sqrt{ \frac{ {\rm Var} \left[\widehat{\tau}_Y - {\tau}_{x_0,y_0} \hat{\tau}_D  \mid \mathcal{F}, \mathcal{Z} \right] }{ {\rm Var} \left[\widehat{\tau}_Y - \widehat{\tau}_{x_0,y_0} \hat{\tau}_D  \mid \mathcal{F}, \mathcal{Z} \right]}} =   \sqrt{ \frac{ {\rm E} \left[ \left(\widehat{\tau}_Y - {\tau}_{x_0,y_0} \hat{\tau}_D \right)^2  \mid \mathcal{F}, \mathcal{Z} \right] - {\rm E} \left[ \left(\widehat{\tau}_Y - {\tau}_{x_0,y_0} \hat{\tau}_D \right)  \mid \mathcal{F}, \mathcal{Z} \right]^2}{ {\rm E} \left[ \left(\widehat{\tau}_Y - \widehat{\tau}_{x_0,y_0} \widehat{\tau}_D \right)^2  \mid \mathcal{F}, \mathcal{Z} \right] - {\rm E} \left[ \left(\widehat{\tau}_Y - \widehat{\tau}_{x_0,y_0} \widehat{\tau}_D \right)  \mid \mathcal{F}, \mathcal{Z} \right]^2}} \to 1
\]
and by Slutsky's theorem, we arrive at the desired result.
\end{proof}

\section*{Acknowledgements}
We thank Guillaume Basse, Guanglei Hong and participants at the University of Chicago Quantitative Methods Committee in the Social Sciences, Berkeley Research Workshop in Quantitative Modeling, University of Wisconsin-Madison Demography Seminar, 2018 NetSci Conference, 2018 Atlantic Causal Inference Conference, and Penn Causal Inference Seminar for helpful comments.

\begin{supplement}
\label{suppA}
\stitle{Replication Materials for Empirical Analysis}
\slink[url]{http://www.e-publications.org/ims/support/download/xxx}
\sdescription{The online supplement contains code and replication materials for the }
\end{supplement}

\clearpage
\bibliographystyle{imsart-nameyear}
\bibliography{ref}

\begin{thebibliography}{59}

\bibitem[\protect\citeauthoryear{Aiken et~al.}{2015}]{aiken2015re}
\begin{barticle}[author]
\bauthor{\bsnm{Aiken},~\bfnm{Alexander~M}\binits{A.~M.}},
  \bauthor{\bsnm{Davey},~\bfnm{Calum}\binits{C.}},
  \bauthor{\bsnm{Hargreaves},~\bfnm{James~R}\binits{J.~R.}} \AND
  \bauthor{\bsnm{Hayes},~\bfnm{Richard~J}\binits{R.~J.}}
(\byear{2015}).
\btitle{Re-analysis of health and educational impacts of a school-based
  deworming programme in western Kenya: a pure replication}.
\bjournal{International Journal of Epidemiology}
\bvolume{44}
\bpages{1572--1580}.
\end{barticle}
\endbibitem

\bibitem[\protect\citeauthoryear{Andrews}{2000}]{andrews2000inconsistency}
\begin{barticle}[author]
\bauthor{\bsnm{Andrews},~\bfnm{Donald~WK}\binits{D.~W.}}
(\byear{2000}).
\btitle{Inconsistency of the bootstrap when a parameter is on the boundary of
  the parameter space}.
\bjournal{Econometrica}
\bvolume{68}
\bpages{399--405}.
\end{barticle}
\endbibitem

\bibitem[\protect\citeauthoryear{Andrews and
  Guggenberger}{2009}]{andrews2009validity}
\begin{barticle}[author]
\bauthor{\bsnm{Andrews},~\bfnm{Donald~WK}\binits{D.~W.}} \AND
  \bauthor{\bsnm{Guggenberger},~\bfnm{Patrik}\binits{P.}}
(\byear{2009}).
\btitle{Validity of subsampling and ``plug-in asymptotic'' inference for
  parameters defined by moment inequalities}.
\bjournal{Econometric Theory}
\bvolume{25}
\bpages{669--709}.
\end{barticle}
\endbibitem

\bibitem[\protect\citeauthoryear{Angrist, Imbens and
  Rubin}{1996}]{angrist_identification_1996}
\begin{barticle}[author]
\bauthor{\bsnm{Angrist},~\bfnm{Joshua~D.}\binits{J.~D.}},
  \bauthor{\bsnm{Imbens},~\bfnm{Guido~W.}\binits{G.~W.}} \AND
  \bauthor{\bsnm{Rubin},~\bfnm{Donald~B.}\binits{D.~B.}}
(\byear{1996}).
\btitle{Identification of Causal Effects Using Instrumental Variables}.
\bjournal{Journal of the American Statistical Association}
\bvolume{91}
\bpages{444--455}.
\end{barticle}
\endbibitem

\bibitem[\protect\citeauthoryear{Angrist and
  Krueger}{2001}]{angrist_instrumental_2001}
\begin{barticle}[author]
\bauthor{\bsnm{Angrist},~\bfnm{Joshua~D}\binits{J.~D.}} \AND
  \bauthor{\bsnm{Krueger},~\bfnm{Alan~B}\binits{A.~B.}}
(\byear{2001}).
\btitle{Instrumental variables and the search for identification: From supply
  and demand to natural experiments}.
\bjournal{Journal of Economic perspectives}
\bvolume{15}
\bpages{69--85}.
\end{barticle}
\endbibitem

\bibitem[\protect\citeauthoryear{Aronow and Samii}{2017}]{AronowSamii2017-AOAS}
\begin{barticle}[author]
\bauthor{\bsnm{Aronow},~\bfnm{Peter~M.}\binits{P.~M.}} \AND
  \bauthor{\bsnm{Samii},~\bfnm{Cyrus}\binits{C.}}
(\byear{2017}).
\btitle{Estimating Average Causal Effects Under Interference Between Units}.
\bjournal{\emph{forthcoming}, Annals of Applied Statistics}.
\end{barticle}
\endbibitem

\bibitem[\protect\citeauthoryear{Baiocchi, Cheng and
  Small}{2014}]{baiocchi_instrumental_2014}
\begin{barticle}[author]
\bauthor{\bsnm{Baiocchi},~\bfnm{Michael}\binits{M.}},
  \bauthor{\bsnm{Cheng},~\bfnm{Jing}\binits{J.}} \AND
  \bauthor{\bsnm{Small},~\bfnm{Dylan~S}\binits{D.~S.}}
(\byear{2014}).
\btitle{Instrumental variable methods for causal inference}.
\bjournal{Statistics in Medicine}
\bvolume{33}
\bpages{2297--2340}.
\end{barticle}
\endbibitem

\bibitem[\protect\citeauthoryear{Baiocchi
  et~al.}{2010}]{baiocchi_building_2010}
\begin{barticle}[author]
\bauthor{\bsnm{Baiocchi},~\bfnm{Mike}\binits{M.}},
  \bauthor{\bsnm{Small},~\bfnm{Dylan~S.}\binits{D.~S.}},
  \bauthor{\bsnm{Lorch},~\bfnm{Scott}\binits{S.}} \AND
  \bauthor{\bsnm{Rosenbaum},~\bfnm{Paul~R.}\binits{P.~R.}}
(\byear{2010}).
\btitle{Building a Stronger Instrument in an Observational Study of Perinatal
  Care for Premature Infants}.
\bjournal{Journal of the American Statistical Association}
\bvolume{105}
\bpages{1285-1296}.
\end{barticle}
\endbibitem

\bibitem[\protect\citeauthoryear{Basse and
  Airoldi}{2018}]{basse2018limitations}
\begin{barticle}[author]
\bauthor{\bsnm{Basse},~\bfnm{Guillaume~W}\binits{G.~W.}} \AND
  \bauthor{\bsnm{Airoldi},~\bfnm{Edoardo~M}\binits{E.~M.}}
(\byear{2018}).
\btitle{Limitations of design-based causal inference and A/B testing under
  arbitrary and network interference}.
\bjournal{Sociological Methodology}
\bvolume{48}
\bpages{136--151}.
\end{barticle}
\endbibitem

\bibitem[\protect\citeauthoryear{Bowers, Fredrickson and
  Panagopoulos}{2013}]{BowersFredricksonPanagopoulos2013}
\begin{barticle}[author]
\bauthor{\bsnm{Bowers},~\bfnm{Jake}\binits{J.}},
  \bauthor{\bsnm{Fredrickson},~\bfnm{Mark~M}\binits{M.~M.}} \AND
  \bauthor{\bsnm{Panagopoulos},~\bfnm{Costas}\binits{C.}}
(\byear{2013}).
\btitle{Reasoning about Interference Between Units: A General Framework}.
\bjournal{Political Analysis}
\bvolume{21}
\bpages{97--124}.
\end{barticle}
\endbibitem

\bibitem[\protect\citeauthoryear{Cheng and Small}{2006}]{cheng_bounds_2006}
\begin{barticle}[author]
\bauthor{\bsnm{Cheng},~\bfnm{Jing}\binits{J.}} \AND
  \bauthor{\bsnm{Small},~\bfnm{Dylan~S}\binits{D.~S.}}
(\byear{2006}).
\btitle{Bounds on causal effects in three-arm trials with non-compliance}.
\bjournal{Journal of the Royal Statistical Society: Series B (Statistical
  Methodology)}
\bvolume{68}
\bpages{815--836}.
\end{barticle}
\endbibitem

\bibitem[\protect\citeauthoryear{Chernozhukov, Lee and
  Rosen}{2013}]{chernozhukov2013intersection}
\begin{barticle}[author]
\bauthor{\bsnm{Chernozhukov},~\bfnm{Victor}\binits{V.}},
  \bauthor{\bsnm{Lee},~\bfnm{Sokbae}\binits{S.}} \AND
  \bauthor{\bsnm{Rosen},~\bfnm{Adam~M}\binits{A.~M.}}
(\byear{2013}).
\btitle{Intersection bounds: estimation and inference}.
\bjournal{Econometrica}
\bvolume{81}
\bpages{667--737}.
\end{barticle}
\endbibitem

\bibitem[\protect\citeauthoryear{Choi}{2017}]{choi_estimation_2017}
\begin{barticle}[author]
\bauthor{\bsnm{Choi},~\bfnm{David}\binits{D.}}
(\byear{2017}).
\btitle{Estimation of monotone treatment effects in network experiments}.
\bjournal{Journal of the American Statistical Association}
\bvolume{112}
\bpages{1147--1155}.
\end{barticle}
\endbibitem

\bibitem[\protect\citeauthoryear{Cook and Campbell}{1979}]{cook_quasi_1979}
\begin{bbook}[author]
\bauthor{\bsnm{Cook},~\bfnm{Thomas~D}\binits{T.~D.}} \AND
  \bauthor{\bsnm{Campbell},~\bfnm{Donald~Thomas}\binits{D.~T.}}
(\byear{1979}).
\btitle{Quasi-experimentation: Design and analysis for field settings}
\bvolume{3}.
\bpublisher{Rand McNally Chicago}.
\end{bbook}
\endbibitem

\bibitem[\protect\citeauthoryear{Cox}{1958}]{cox_planning_1958}
\begin{bbook}[author]
\bauthor{\bsnm{Cox},~\bfnm{D.~R.}\binits{D.~R.}}
(\byear{1958}).
\btitle{Planning of Experiments}.
\bpublisher{Wiley}, \baddress{New York, NY}.
\end{bbook}
\endbibitem

\bibitem[\protect\citeauthoryear{Deaton}{2010}]{Deaton:2010}
\begin{barticle}[author]
\bauthor{\bsnm{Deaton},~\bfnm{Angus}\binits{A.}}
(\byear{2010}).
\btitle{Instruments, randomization, and learning about development}.
\bjournal{Journal of Economic Literature}
\bpages{424--455}.
\end{barticle}
\endbibitem

\bibitem[\protect\citeauthoryear{Forastiere, Mealli and
  VanderWeele}{2016}]{forastiere2016identification}
\begin{barticle}[author]
\bauthor{\bsnm{Forastiere},~\bfnm{Laura}\binits{L.}},
  \bauthor{\bsnm{Mealli},~\bfnm{Fabrizia}\binits{F.}} \AND
  \bauthor{\bsnm{VanderWeele},~\bfnm{Tyler~J}\binits{T.~J.}}
(\byear{2016}).
\btitle{Identification and estimation of causal mechanisms in clustered
  encouragement designs: Disentangling bed nets using Bayesian principal
  stratification}.
\bjournal{Journal of the American Statistical Association}
\bvolume{111}
\bpages{510--525}.
\end{barticle}
\endbibitem

\bibitem[\protect\citeauthoryear{Frangakis, Rubin and
  Zhou}{2002}]{frangakis_clustered_2002}
\begin{barticle}[author]
\bauthor{\bsnm{Frangakis},~\bfnm{Constantine~E}\binits{C.~E.}},
  \bauthor{\bsnm{Rubin},~\bfnm{Donald~B}\binits{D.~B.}} \AND
  \bauthor{\bsnm{Zhou},~\bfnm{Xiao-Hua}\binits{X.-H.}}
(\byear{2002}).
\btitle{Clustered encouragement designs with individual noncompliance: Bayesian
  inference with randomization, and application to advance directive forms}.
\bjournal{Biostatistics}
\bvolume{3}
\bpages{147--164}.
\end{barticle}
\endbibitem

\bibitem[\protect\citeauthoryear{Halloran and
  Struchiner}{1991}]{halloran_study_1991}
\begin{barticle}[author]
\bauthor{\bsnm{Halloran},~\bfnm{M.~Elizabeth}\binits{M.~E.}} \AND
  \bauthor{\bsnm{Struchiner},~\bfnm{Claudio~J.}\binits{C.~J.}}
(\byear{1991}).
\btitle{Study Designs for Dependent Happenings}.
\bjournal{Epidemiology}
\bvolume{2}
\bpages{331--338}.
\end{barticle}
\endbibitem

\bibitem[\protect\citeauthoryear{Halloran and
  Struchiner}{1995}]{halloran_causal_1995}
\begin{barticle}[author]
\bauthor{\bsnm{Halloran},~\bfnm{M.~Elizabeth}\binits{M.~E.}} \AND
  \bauthor{\bsnm{Struchiner},~\bfnm{Claudio~J.}\binits{C.~J.}}
(\byear{1995}).
\btitle{Causal Inference in Infectious Diseases}.
\bjournal{Epidemiology}
\bvolume{6}
\bpages{142--151}.
\end{barticle}
\endbibitem

\bibitem[\protect\citeauthoryear{Hern{\'a}n and
  Robins}{2006}]{hernan_instruments_2006}
\begin{barticle}[author]
\bauthor{\bsnm{Hern{\'a}n},~\bfnm{Miguel~A}\binits{M.~A.}} \AND
  \bauthor{\bsnm{Robins},~\bfnm{James~M}\binits{J.~M.}}
(\byear{2006}).
\btitle{Instruments for causal inference: an epidemiologist's dream?}
\bjournal{Epidemiology}
\bvolume{17}
\bpages{360--372}.
\end{barticle}
\endbibitem

\bibitem[\protect\citeauthoryear{Hicks, Kremer and
  Miguel}{2015}]{hicks2015commentary}
\begin{barticle}[author]
\bauthor{\bsnm{Hicks},~\bfnm{Joan~Hamory}\binits{J.~H.}},
  \bauthor{\bsnm{Kremer},~\bfnm{Michael}\binits{M.}} \AND
  \bauthor{\bsnm{Miguel},~\bfnm{Edward}\binits{E.}}
(\byear{2015}).
\btitle{Commentary: Deworming externalities and schooling impacts in Kenya: a
  comment on Aiken et al.(2015) and Davey et al.(2015)}.
\bjournal{International Journal of Epidemiology}
\bvolume{44}
\bpages{1593--1596}.
\end{barticle}
\endbibitem

\bibitem[\protect\citeauthoryear{Hirano and
  Porter}{2012}]{hirano2012impossibility}
\begin{barticle}[author]
\bauthor{\bsnm{Hirano},~\bfnm{Keisuke}\binits{K.}} \AND
  \bauthor{\bsnm{Porter},~\bfnm{Jack~R}\binits{J.~R.}}
(\byear{2012}).
\btitle{Impossibility results for nondifferentiable functionals}.
\bjournal{Econometrica}
\bvolume{80}
\bpages{1769--1790}.
\end{barticle}
\endbibitem

\bibitem[\protect\citeauthoryear{Hong}{2015}]{hong2015causality}
\begin{bbook}[author]
\bauthor{\bsnm{Hong},~\bfnm{Guanglei}\binits{G.}}
(\byear{2015}).
\btitle{Causality in a social world: Moderation, mediation and spill-over}.
\bpublisher{John Wiley \& Sons}.
\end{bbook}
\endbibitem

\bibitem[\protect\citeauthoryear{Hong and
  Raudenbush}{2006}]{hong2006evaluating}
\begin{barticle}[author]
\bauthor{\bsnm{Hong},~\bfnm{Guanglei}\binits{G.}} \AND
  \bauthor{\bsnm{Raudenbush},~\bfnm{Stephen~W}\binits{S.~W.}}
(\byear{2006}).
\btitle{Evaluating kindergarten retention policy: A case study of causal
  inference for multilevel observational data}.
\bjournal{Journal of the American Statistical Association}
\bvolume{101}
\bpages{901--910}.
\end{barticle}
\endbibitem

\bibitem[\protect\citeauthoryear{Horowitz and
  Manski}{2000}]{horowitz_nonparametric_2000}
\begin{barticle}[author]
\bauthor{\bsnm{Horowitz},~\bfnm{Joel~L}\binits{J.~L.}} \AND
  \bauthor{\bsnm{Manski},~\bfnm{Charles~F}\binits{C.~F.}}
(\byear{2000}).
\btitle{Nonparametric analysis of randomized experiments with missing covariate
  and outcome data}.
\bjournal{Journal of the American statistical Association}
\bvolume{95}
\bpages{77--84}.
\end{barticle}
\endbibitem

\bibitem[\protect\citeauthoryear{Hudgens and
  Halloran}{2008}]{hudgens_toward_2008}
\begin{barticle}[author]
\bauthor{\bsnm{Hudgens},~\bfnm{Michael~G}\binits{M.~G.}} \AND
  \bauthor{\bsnm{Halloran},~\bfnm{M~Elizabeth}\binits{M.~E.}}
(\byear{2008}).
\btitle{Toward causal inference with interference}.
\bjournal{Journal of the American Statistical Association}
\bvolume{103}
\bpages{832--842}.
\end{barticle}
\endbibitem

\bibitem[\protect\citeauthoryear{Imai, Jiang and Malai}{2018}]{imai2018causal}
\begin{bunpublished}[author]
\bauthor{\bsnm{Imai},~\bfnm{Kosuke}\binits{K.}},
  \bauthor{\bsnm{Jiang},~\bfnm{Zhichao}\binits{Z.}} \AND
  \bauthor{\bsnm{Malai},~\bfnm{Anup}\binits{A.}}
(\byear{2018}).
\btitle{Causal Inference with Interference and Noncompliance in Two-Stage
  Randomized Experiments}.
\bnote{Unpublished Manuscript}.
\end{bunpublished}
\endbibitem

\bibitem[\protect\citeauthoryear{Imai et~al.}{2009}]{imai_essential_2009}
\begin{barticle}[author]
\bauthor{\bsnm{Imai},~\bfnm{Kosuke}\binits{K.}},
  \bauthor{\bsnm{King},~\bfnm{Gary}\binits{G.}},
  \bauthor{\bsnm{Nall},~\bfnm{Clayton}\binits{C.}} \betal{et~al.}
(\byear{2009}).
\btitle{The essential role of pair matching in cluster-randomized experiments,
  with application to the Mexican universal health insurance evaluation}.
\bjournal{Statistical Science}
\bvolume{24}
\bpages{29--53}.
\end{barticle}
\endbibitem

\bibitem[\protect\citeauthoryear{Imbens}{2010}]{Imbens:2010}
\begin{barticle}[author]
\bauthor{\bsnm{Imbens},~\bfnm{Guido~W.}\binits{G.~W.}}
(\byear{2010}).
\btitle{Better LATE Than Nothing: Some Comments on Deaton (2009) and Heckman
  and Urzua (2009)}.
\bjournal{Journal of Economic Literature}
\bvolume{48}
\bpages{399-423}.
\end{barticle}
\endbibitem

\bibitem[\protect\citeauthoryear{Imbens}{2014}]{imbens_instrumental_2014}
\begin{barticle}[author]
\bauthor{\bsnm{Imbens},~\bfnm{Guido~W}\binits{G.~W.}}
(\byear{2014}).
\btitle{Instrumental Variables: An Econometrician's Perspective}.
\bjournal{Statistical Science}
\bpages{323--358}.
\end{barticle}
\endbibitem

\bibitem[\protect\citeauthoryear{Imbens and
  Angrist}{1994}]{imbens_identification_1994}
\begin{barticle}[author]
\bauthor{\bsnm{Imbens},~\bfnm{Guido~W.}\binits{G.~W.}} \AND
  \bauthor{\bsnm{Angrist},~\bfnm{Joshua~D.}\binits{J.~D.}}
(\byear{1994}).
\btitle{Identification and Estimation of Local Average Treatment Effects}.
\bjournal{Econometrica}
\bvolume{62}
\bpages{467--475}.
\end{barticle}
\endbibitem

\bibitem[\protect\citeauthoryear{Imbens and
  Manski}{2004}]{imbens_confidence_2004}
\begin{barticle}[author]
\bauthor{\bsnm{Imbens},~\bfnm{Guido~W}\binits{G.~W.}} \AND
  \bauthor{\bsnm{Manski},~\bfnm{Charles~F}\binits{C.~F.}}
(\byear{2004}).
\btitle{Confidence intervals for partially identified parameters}.
\bjournal{Econometrica}
\bvolume{72}
\bpages{1845--1857}.
\end{barticle}
\endbibitem

\bibitem[\protect\citeauthoryear{Imbens and
  Wooldridge}{2008}]{imbens2008recent}
\begin{barticle}[author]
\bauthor{\bsnm{Imbens},~\bfnm{Guido~M}\binits{G.~M.}} \AND
  \bauthor{\bsnm{Wooldridge},~\bfnm{Jeffrey~M}\binits{J.~M.}}
(\byear{2008}).
\btitle{Recent developments in the econometrics of program evaluation}.
\bjournal{Journal of Economic Literature}
\bvolume{47}
\bpages{5--86}.
\end{barticle}
\endbibitem

\bibitem[\protect\citeauthoryear{Jo, Asparouhov and
  Muth{\'e}n}{2008}]{jo_intention_2008}
\begin{barticle}[author]
\bauthor{\bsnm{Jo},~\bfnm{Booil}\binits{B.}},
  \bauthor{\bsnm{Asparouhov},~\bfnm{Tihomir}\binits{T.}} \AND
  \bauthor{\bsnm{Muth{\'e}n},~\bfnm{Bengt~O}\binits{B.~O.}}
(\byear{2008}).
\btitle{Intention-to-treat analysis in cluster randomized trials with
  noncompliance}.
\bjournal{Statistics in medicine}
\bvolume{27}
\bpages{5565--5577}.
\end{barticle}
\endbibitem

\bibitem[\protect\citeauthoryear{Kang and Imbens}{2016}]{kang_peer_2016}
\begin{barticle}[author]
\bauthor{\bsnm{Kang},~\bfnm{Hyunseung}\binits{H.}} \AND
  \bauthor{\bsnm{Imbens},~\bfnm{Guido}\binits{G.}}
(\byear{2016}).
\btitle{Peer Encouragement Designs in Causal Inference with Partial
  Interference and Identification of Local Average Network Effects}.
\bjournal{arXiv preprint arXiv:1609.04464}.
\end{barticle}
\endbibitem

\bibitem[\protect\citeauthoryear{Kang and Keele}{2018}]{kangkeele2018}
\begin{bunpublished}[author]
\bauthor{\bsnm{Kang},~\bfnm{Hyunseung}\binits{H.}} \AND
  \bauthor{\bsnm{Keele},~\bfnm{Luke}\binits{L.}}
(\byear{2018}).
\btitle{Estimation Methods for Cluster Randomized Trials with Noncompliance: A
  Study of A Biometric Smartcard Payment System in India}.
\bnote{Unpublished Manuscript}.
\end{bunpublished}
\endbibitem

\bibitem[\protect\citeauthoryear{Lehmann}{2004}]{lehmann_elements_2004}
\begin{bbook}[author]
\bauthor{\bsnm{Lehmann},~\bfnm{Erich~Leo}\binits{E.~L.}}
(\byear{2004}).
\btitle{Elements of Large-Sample Theory}.
\bpublisher{Springer Science \& Business Media}.
\end{bbook}
\endbibitem

\bibitem[\protect\citeauthoryear{Liu and Hudgens}{2014}]{liu_large_2014}
\begin{barticle}[author]
\bauthor{\bsnm{Liu},~\bfnm{Lan}\binits{L.}} \AND
  \bauthor{\bsnm{Hudgens},~\bfnm{Michael~G}\binits{M.~G.}}
(\byear{2014}).
\btitle{Large sample randomization inference of causal effects in the presence
  of interference}.
\bjournal{Journal of the American Statistical Association}
\bvolume{109}
\bpages{288--301}.
\end{barticle}
\endbibitem

\bibitem[\protect\citeauthoryear{Manski}{1993}]{manski1993identification}
\begin{barticle}[author]
\bauthor{\bsnm{Manski},~\bfnm{Charles~F}\binits{C.~F.}}
(\byear{1993}).
\btitle{Identification of endogenous social effects: The reflection problem}.
\bjournal{The Review of Economic Studies}
\bvolume{60}
\bpages{531--542}.
\end{barticle}
\endbibitem

\bibitem[\protect\citeauthoryear{Manski}{2013}]{manski2013identification}
\begin{barticle}[author]
\bauthor{\bsnm{Manski},~\bfnm{Charles~F}\binits{C.~F.}}
(\byear{2013}).
\btitle{Identification of treatment response with social interactions}.
\bjournal{The Econometrics Journal}
\bvolume{16}
\bpages{S1--S23}.
\end{barticle}
\endbibitem

\bibitem[\protect\citeauthoryear{Miguel and Kremer}{2004}]{miguel2004worms}
\begin{barticle}[author]
\bauthor{\bsnm{Miguel},~\bfnm{Edward}\binits{E.}} \AND
  \bauthor{\bsnm{Kremer},~\bfnm{Michael}\binits{M.}}
(\byear{2004}).
\btitle{Worms: identifying impacts on education and health in the presence of
  treatment externalities}.
\bjournal{Econometrica}
\bvolume{72}
\bpages{159--217}.
\end{barticle}
\endbibitem

\bibitem[\protect\citeauthoryear{Murray}{1998}]{murray_design_1998}
\begin{bbook}[author]
\bauthor{\bsnm{Murray},~\bfnm{David~M}\binits{D.~M.}}
(\byear{1998}).
\btitle{Design and analysis of group-randomized trials}
\bvolume{29}.
\bpublisher{Monographs in Epidemiology}.
\end{bbook}
\endbibitem

\bibitem[\protect\citeauthoryear{Nelson and Startz}{1990}]{nelson1990some}
\begin{barticle}[author]
\bauthor{\bsnm{Nelson},~\bfnm{Charles~R}\binits{C.~R.}} \AND
  \bauthor{\bsnm{Startz},~\bfnm{Richard}\binits{R.}}
(\byear{1990}).
\btitle{Some Further Results on the Exact Small Sample Properties of the
  Instrumental Variable Estimator}.
\bjournal{Econometrica}
\bpages{967--976}.
\end{barticle}
\endbibitem

\bibitem[\protect\citeauthoryear{Neyman}{1923}]{neyman_application_1990}
\begin{barticle}[author]
\bauthor{\bsnm{Neyman},~\bfnm{Jerzy}\binits{J.}}
(\byear{1923}).
\btitle{On the Application of Probability Theory to Agricultural Experiments.
  Essay on Principles. Section 9.}
\bjournal{Statistical Science}
\bvolume{5}
\bpages{465-472}.
\end{barticle}
\endbibitem

\bibitem[\protect\citeauthoryear{Romano}{1989}]{romano1989bootstrap}
\begin{barticle}[author]
\bauthor{\bsnm{Romano},~\bfnm{Joseph~P}\binits{J.~P.}}
(\byear{1989}).
\btitle{Do bootstrap confidence procedures behave well uniformly in P?}
\bjournal{Canadian Journal of Statistics}
\bvolume{17}
\bpages{75--80}.
\end{barticle}
\endbibitem

\bibitem[\protect\citeauthoryear{Romano and Shaikh}{2008}]{romano2008inference}
\begin{barticle}[author]
\bauthor{\bsnm{Romano},~\bfnm{Joseph~P}\binits{J.~P.}} \AND
  \bauthor{\bsnm{Shaikh},~\bfnm{Azeem~M}\binits{A.~M.}}
(\byear{2008}).
\btitle{Inference for identifiable parameters in partially identified
  econometric models}.
\bjournal{Journal of Statistical Planning and Inference}
\bvolume{138}
\bpages{2786--2807}.
\end{barticle}
\endbibitem

\bibitem[\protect\citeauthoryear{Romano and Shaikh}{2010}]{romano2010inference}
\begin{barticle}[author]
\bauthor{\bsnm{Romano},~\bfnm{Joseph~P}\binits{J.~P.}} \AND
  \bauthor{\bsnm{Shaikh},~\bfnm{Azeem~M}\binits{A.~M.}}
(\byear{2010}).
\btitle{Inference for the identified set in partially identified econometric
  models}.
\bjournal{Econometrica}
\bvolume{78}
\bpages{169--211}.
\end{barticle}
\endbibitem

\bibitem[\protect\citeauthoryear{Rosenbaum}{2007}]{Rosenbaum2007-JASA}
\begin{barticle}[author]
\bauthor{\bsnm{Rosenbaum},~\bfnm{Paul~R.}\binits{P.~R.}}
(\byear{2007}).
\btitle{Interference Between Units in Randomized Experiments}.
\bjournal{Journal of the American Statistical Association}
\bvolume{102}
\bpages{191-200}.
\end{barticle}
\endbibitem

\bibitem[\protect\citeauthoryear{Rubin}{1974}]{rubin_estimating_1974}
\begin{barticle}[author]
\bauthor{\bsnm{Rubin},~\bfnm{Donald~B.}\binits{D.~B.}}
(\byear{1974}).
\btitle{Estimating causal effects of treatments in randomized and nonrandomized
  studies.}
\bjournal{Journal of Educational Psychology}
\bvolume{66}
\bpages{688}.
\end{barticle}
\endbibitem

\bibitem[\protect\citeauthoryear{Schochet and
  Chiang}{2011}]{schochet_estimation_2011}
\begin{barticle}[author]
\bauthor{\bsnm{Schochet},~\bfnm{Peter~Z}\binits{P.~Z.}} \AND
  \bauthor{\bsnm{Chiang},~\bfnm{Hanley~S}\binits{H.~S.}}
(\byear{2011}).
\btitle{Estimation and identification of the complier average causal effect
  parameter in education RCTs}.
\bjournal{Journal of Educational and Behavioral Statistics}
\bvolume{36}
\bpages{307--345}.
\end{barticle}
\endbibitem

\bibitem[\protect\citeauthoryear{Shalizi and
  Thomas}{2011}]{shalizi2011homophily}
\begin{barticle}[author]
\bauthor{\bsnm{Shalizi},~\bfnm{Cosma~Rohilla}\binits{C.~R.}} \AND
  \bauthor{\bsnm{Thomas},~\bfnm{Andrew~C}\binits{A.~C.}}
(\byear{2011}).
\btitle{Homophily and contagion are generically confounded in observational
  social network studies}.
\bjournal{Sociological Methods \& Research}
\bvolume{40}
\bpages{211--239}.
\end{barticle}
\endbibitem

\bibitem[\protect\citeauthoryear{Small, Ten~Have and
  Rosenbaum}{2008}]{small_randomization_2008}
\begin{barticle}[author]
\bauthor{\bsnm{Small},~\bfnm{Dylan~S}\binits{D.~S.}},
  \bauthor{\bsnm{Ten~Have},~\bfnm{Thomas~R}\binits{T.~R.}} \AND
  \bauthor{\bsnm{Rosenbaum},~\bfnm{Paul~R}\binits{P.~R.}}
(\byear{2008}).
\btitle{Randomization inference in a group--randomized trial of treatments for
  depression: covariate adjustment, noncompliance, and quantile effects}.
\bjournal{Journal of the American Statistical Association}
\bvolume{103}
\bpages{271--279}.
\end{barticle}
\endbibitem

\bibitem[\protect\citeauthoryear{Sobel}{2006}]{Sobel2006-JASA}
\begin{barticle}[author]
\bauthor{\bsnm{Sobel},~\bfnm{Michael~E}\binits{M.~E.}}
(\byear{2006}).
\btitle{What do randomized studies of housing mobility demonstrate? Causal
  inference in the face of interference}.
\bjournal{Journal of the American Statistical Association}
\bvolume{101}
\bpages{1398--1407}.
\end{barticle}
\endbibitem

\bibitem[\protect\citeauthoryear{Swanson and Hern\'{a}n}{2014}]{Swanson:2014}
\begin{barticle}[author]
\bauthor{\bsnm{Swanson},~\bfnm{Sonja~A.}\binits{S.~A.}} \AND
  \bauthor{\bsnm{Hern\'{a}n},~\bfnm{Miguel~A.}\binits{M.~A.}}
(\byear{2014}).
\btitle{Think Globally, Act Globally: An Epidemiologist's Perspective on
  Instrumental Variable Estimation}.
\bjournal{Statistical Science}
\bvolume{29}
\bpages{371-374}.
\end{barticle}
\endbibitem

\bibitem[\protect\citeauthoryear{Tamer}{2010}]{tamer2010partial}
\begin{barticle}[author]
\bauthor{\bsnm{Tamer},~\bfnm{Elie}\binits{E.}}
(\byear{2010}).
\btitle{Partial identification in econometrics}.
\bjournal{Annu. Rev. Econ.}
\bvolume{2}
\bpages{167--195}.
\end{barticle}
\endbibitem

\bibitem[\protect\citeauthoryear{Tchetgen~Tchetgen and
  VanderWeele}{2012}]{TchetgenTchetgenVanderWeele2010}
\begin{barticle}[author]
\bauthor{\bsnm{Tchetgen~Tchetgen},~\bfnm{Eric~J.}\binits{E.~J.}} \AND
  \bauthor{\bsnm{VanderWeele},~\bfnm{Tyler~J}\binits{T.~J.}}
(\byear{2012}).
\btitle{On causal inference in the presence of interference}.
\bjournal{Statistical Methods in Medical Research}
\bvolume{21}
\bpages{55--75}.
\end{barticle}
\endbibitem

\bibitem[\protect\citeauthoryear{Vanderweele}{2008}]{VanderWeele2008-StatMed}
\begin{barticle}[author]
\bauthor{\bsnm{Vanderweele},~\bfnm{TJ}\binits{T.}}
(\byear{2008}).
\btitle{Ignorability and stability assumptions in neighborhood effects
  research.}
\bjournal{Statistics in Medicine}
\bvolume{27}
\bpages{1934--1943}.
\end{barticle}
\endbibitem

\bibitem[\protect\citeauthoryear{VanderWeele et~al.}{2013}]{Vanderweele:2013}
\begin{barticle}[author]
\bauthor{\bsnm{VanderWeele},~\bfnm{Tyler~J.}\binits{T.~J.}},
  \bauthor{\bsnm{Hong},~\bfnm{Guanglei}\binits{G.}},
  \bauthor{\bsnm{Jones},~\bfnm{Stephanie~M.}\binits{S.~M.}} \AND
  \bauthor{\bsnm{Brown},~\bfnm{Joshua~L.}\binits{J.~L.}}
(\byear{2013}).
\btitle{Mediation and Spillover Effects in Group Randomized Trials: A Case
  Study of the 4Rs Educational Intervention}.
\bjournal{Journal of the American Statistical Association}
\bvolume{108}
\bpages{469--481}.
\end{barticle}
\endbibitem

\end{thebibliography}

\end{document}